\newtheorem{theorem}{\bf Theorem}[section]
\newtheorem{proposition}{ \bf Proposition}[section]
\newtheorem {definition} {\bf Definition}[section]
\newtheorem {remark} {\bf Remark}[section]
\def\0{{\bf 0}}
\def\A{{\mathbf{A}}}
\def\r{{\bf r}}
\def\p{{\mathbf{p}}}
\def\q{{\mathbf{q}}}
\def\r{{\mathbf{r}}}
\def\R{\mathbb{R}}
\def\u{{\bf u}}
\def\v{{\bf v}}
\def\z{{\mathbf{z}}}
\def\ph{\phantom{-}}
\def\om{\boldsymbol{\omega}}
\def\xii{\boldsymbol{\xi}}
\def\muu{\boldsymbol{\mu}}
\begin{document}

%%%% Article title to be placed here
\title{On the n-body problem in $\mathbb{R}^4$}

\author{%%%% Author details
Tanya Schmah$^{1}$, Cristina Stoica$^{2}$}

\maketitle

%%%%%%%%% Insert author address here
%\address{$^{1}$Department of Mathematics and Statistics, University of Ottawa, Ottawa, K1N 6N5 Canada\\
%$^{2}$Department of Mathematics, Wilfrid Laurier University, Waterloo, N2L 3C5 Canada}

%%%% Subject entries to be placed here %%%%
%\subject{applied mathematics, mathematical physics, geometry, dynamics}

%%%% Keyword entries to be placed here %%%%
%\keywords{n-body problem in $\mathbb{R}^4$, general potential, SO(4)-symmetry, relative equilibria, regular n-gon,  reduction, three-body,  stability}

%%%% Insert corresponding author and its email address}
%\corres{Cristina Stoica\\
%\email{cstoica@wlu.ca}}

%%%% Abstract text to be placed here %%%%%%%%%%%%
\begin{abstract}
Using geometric mechanics methods, we examine aspects of the dynamics of  n mass points in $\mathbb{R}^4$ with a general pairwise potential. We investigate the central force problem, set up the  n-body problem  and discuss certain properties of relative equilibria.  We describe regular n-gons  in $\mathbb{R}^4$ and when the masses are equal,
we determine the invariant manifold of motions with regular n-gon configurations. 
In the case n=3 we reduce the dynamics to a six degrees of freedom system and we show that for generic potentials and momenta, relative equilibria  with equilateral configuration are unstable. 
\end{abstract}

%%%%%%%%%%%%%%%%%%%%%%%%%%%

\tableofcontents

\section{Introduction}

In the last decades a good body of work has been devoted to the study of various generalizations of the classical $n$-body problem. Thus we find studies  of $n$ mass points with modified potentials (\cite{APS14, BVC17,DMS00, MG81, St00}), or on configuration spaces with a non-Euclidean structure
 (\cite{Shc06, Dia14, MS15, DS18, St18}), or in higher dimensional Euclidean spaces (\cite{PJ80, PJ81a, PJ81b, OV06, AC98, Ch13, Mo14, AMP16}). 

In this paper we investigate aspects of  the  dynamics in $\mathbb{R}^4$ of $n$ mass points with binary interaction depending on  distance only.    Working from a  geometric mechanics perspective, we identify  a variety of  interesting properties and  retrieve some known results.

As a preamble, we review  the geometry of the $SO(4)$ action on $\mathbb{R}^4$.  It is known that any matrix in $SO(4)$ can be expressed as the product of two planar rotations, each in
its own invariant plane, with the two planes mutually orthogonal. 
Thus any element of $SO(4)$ is conjugate to one in the \textbf{double planar rotation group},
defined in an $Oxyzw$ cartesian system of coordinates as  
  $SO(2)_{xy}\times SO(2)_{zw}$, the subgroup of $SO(4)$  leaving 
  the \textbf{principal planes} $Oxy$ and $Ozw$ invariant.
This is a normal form for rotations, and can 
be understood as a generalization of Euler's rotation theorem. It reduces many questions   about $SO(4)$ symmetries to questions about symmetries with respect to the action
of the double planar rotation group.
 
We choose to focus on the double planar rotation group rather than the full symmetry group $SO(4)$ since it simplifies our study (in particular, reduction simplifies  greatly) and we still
obtain a rich variety of dynamical features. 
In particular, any relative equilibrium of the full $SO(4)$ action is conjugate
to a relative equilibrium with respect to the action of the double planar rotation group;
see Remark \ref{REnormalform}.

We start by  taking  a close look at   the central force problem. We find  that on each of the principal planes,
the projection of the motions obeys the area laws. We also identify various invariant manifolds, in particular that of \textbf{proportional motions} for which the polar radii of the projections of onto the principal planes have a constant ratio.

We briefly consider the Kepler problem for the $\mathbb{R}^3$-Newtonian  potential $V(r)= -k/r$,  $k>0$ ($r$ being the distance between two mass points).
Note that this potential 
 is not the solution of the Laplace equation in $\mathbb{R}^4$, and so is not the standard model
 in mathematical physics. However the induced dynamics is interesting and was and is considered for theoretical significance (see, for instance, \cite{PJ80, PJ81a, PJ81b, AC98, OV06, Ch13}).  We observe that the dynamics under the $\mathbb{R}^4$  gravitational potential, that is $V(r)= -k/r^2$,  $k>0$, also known as the Jacobi potential (see \cite{Al15} and references within), presents intriguing  degeneracies and  we defer its detailed study  for future projects. 
 
 In  the Newtonian  potential $V(r)=-1/r$, we prove that in the Kepler problem collisions are possible if and only if the angular momentum is zero. We also note that this conclusion is  true for any homogeneous law, except  the aforementioned attractive inverse square Jacobi potential.

Recall that in the classical  ($\mathbb{R}^2$  or $\mathbb{R}^3$)  $n$-body problem, a homographic solution  is one such that the configurations formed by the masses  are similar modulo a rotation  and a scalar multiplication. Two particular cases are usually highlighted: the  relative equilibria (RE), for which  the scalar multiplication is the identity; and homothetic solutions, which display no rotation. Along a homographic solution the bodies form a \textbf{central configuration}.  In the classical Newtonian case, this  is equivalent to 
having the configuration as 
 a   critical point of the potential among configurations with a given moment of inertia.  Perhaps the most important property of central configuration concerns total collision: when the bodies are released from the central configuration with zero initial velocity, they end in total collapse.

% (and so they are critical points of the potential with a given moment of inertia), but this not true in $\mathbb{R}^4$.

\iffalse
In $\mathbb{R}^4$ the same definitions apply but with the rotation group given by $SO(4)$. In this paper, however, we define RE via the general definition for Lie-symmetric ODE systems: a solution is a RE if it is also a one-parameter group orbit.
%; its configuration $\q_0$ is  called to be a \textbf{base point}. 
We  note that in the  planar $n$-body problem of celestial mechanics  the RE configurations  are in fact the central configurations, but this not true in $\mathbb{R}^4$.
\fi

The Newtonian $n$-body problem  in higher dimensional Euclidean spaces was previously studied by Palmore  (\cite{PJ80, PJ81a, PJ81b}) and Albouy and Chenciner (\cite{AC98, Ch13}, but see also \cite{Mo14}). 
%Here we allow the potential to be a general rotational invariant function. 
Palmore focuses on homographic motions and RE, finding, amongst other results, that there are  RE solutions which are not central configurations, and  that the configurations  of the homothetic solutions  must always be  central. Albouy and Chenciner present an extensive study in  Euclidean spaces of any dimension. They prove that a  necessary condition for the 
existence of non-homothetic homographic motion is that the motion takes place in an even-dimensional space. 
Further, two cases are possible: either the configuration is central (the latter being defined as a critical point of the potential among configurations with a given moment of inertia) and the space where the motion takes place is endowed with a hermitian structure; or it is \textit{balanced}, that is, we have a critical point of the potential among configurations with a given inertia spectrum, and the motion is a new type, quasi-periodic, of relative equilibrium.

In $\mathbb{R}^4$ we define RE via the general definition for Lie-symmetric ODE systems: a solution is a RE if it is also a one-parameter group orbit.
We show that  balanced configurations are exactly the  configurations of RE, and 
that these are central configurations if and only if the components of the angular velocity on the two principal planes are equal. We further investigate collinear RE and find that either they have configurations  in one of the principal planes, or their angular velocity  components are equal.

%When the  masses are equal, the  $n$-body problem is amenable to methods that use the finite symmetries given by the permutation of the points' labels.  Thus, using Discrete Reduction \cite{Ma92} and Palais' Principle of Symmetric Criticality \cite{Pal79}, we retrieve the invariant manifolds of homographic motions. On these, the  bodies maintain  a regular $n$-gon configuration at all times and the dynamics reduce to the central force problem. We observe that while the projection of the $n$-gon on each of the principal planes is a regular $n$-gon as well, the rotations in each plane might have different angular velocities. 
%The dynamics on the homographic invariant manifold coincide to that of the central force problem.

% Note that this result is valid for any rotational-invariant potential.

When the masses are equal, due to the finite symmetries, we are able to detect  low dimensional invariant manifolds using the method of \textit{Discrete Reduction} \cite{Ma92} and Palais' Principle of Symmetric Criticality \cite{Pal79}. These invariant manifolds  consist of equilibria and relative equilibria 
with configurations that are ``regular" $n$-gons  in $\R^4$.
 By a ``regular'' $n$-gon we do not simply mean that the side lengths are equal (which would include e.g. all rhombuses), nor do we
wish to consider only planar shapes. Instead, we follow Coxeter \cite{Cox36}: 
``A polygon (which may be skew) is said to be regular if it
possesses a symmetry which cyclically permutes the vertices (and therefore
also the sides) of the polygon.'' 

We %discuss the definition above as applied to our context and 
classify regular $n$-gons in $\mathbb{R}^4$ in Proposition \ref{prop:reg}.  A \textbf{planar} $n$-gon lies in a 2-dimensional plane in 
$\R^4$; either it lies in a principal plane, or it projects to similar $n$-gons in each of the principal planes.
These two $n$-gons are \textit{synchronised}, meaning that there exists a labelling of the points such that each of the projected $n$-gons is convex.
For a \textbf{nonplanar (skew)} $n$-gon, the projections onto the two principal planes are not similar. 
 The nonplanar regular $n$-gons are of two types. In type (I) both projections onto the principal planes 
 are regular $n$-gons, but they are not synchronised in the sense defined above. In type (II),
at least one of the two projections onto a principal plane has fewer than $n$ sides;
denoting by $b_1$ and $b_2$ the number of sides of the two projections,
$n$ is the lowest common multiplier (lcm) of $b_1$ and $b_2$.
%, where $b^{(1)}_j$ and $b^{2}_j$ are the number of the sides of the projections onto the two principal planes.
Regular planar polygons may have any number of vertices.
The smallest nonplanar polygon of type (I) has $n=5$ vertices, with one projection convex and the other a pentagram.
The smallest nonplanar polygon of type (II) has $n=4$ vertices, with one projection a square and the other a digon (i.e. having $2$ vertices). 
The next smallest nonplanar polygon of type II has $n=6$ vertices, with one projection a triangle and the other a digon. 
The three examples are the only nonplanar polygons with fewer than $7$ vertices.

We also specifically address  the three body  problem. Taking advantage of the Abelian structure of symmetry group $SO(2)_{xy} \times SO(2)_{zw}$, we are able to reduce   the system from 12 to  6 degrees of freedom using directly the integrals of motion. Then we write  the RE conditions and deduce  that for the $\mathbb{R}^3$-Newtonian $V(r)=-1/r$ law, any non-collinear RE must be isosceles; this condition was observed  previously in \cite{AC98}. 
%We also note that this is true for any  homogeneous potential that is  directly dependent on the product of the  masses.  

When the three masses are equal  we study the stability of the equilateral triangle RE (in the case of attractive interactions).
We find that such RE  exist  for any attractive potential. 
%(A note that in the case of the  Jacobi $V(r)=-1/r^2$ law, these RE are not isolated and their existence is insured by an additional condition over momenta is given at the end of the paper.)
%Calculating the appropriate matrix blocks, w
We determine that equilateral RE generically are unstable. Specifically, provided a certain sub-block of the  the Hessian of the amended potential has non-zero determinant, the matrix linearization at the RE displays a nilpotent component responsible for a ``drift" component in the dynamics.

The paper is organized as follows: in Section \ref{Geo} we review the geometry of the $SO(4)$ action on $\mathbb{R}^4$, including the orbit types and the calculation of the momentum map. In Section  \ref{sect: central} we study the central force problem, detect certain invariant manifolds and discuss briefly the Kepler problem and collisions. In the next section we introduce the general $n$-body problem.  We examine RE, balanced and central configurations and collinear RE. In Section \ref{sect:DR} we explore regular $n$-gons in $\mathbb{R}^4$, % following Coxeter's definition, 
and apply the Discrete Reduction method to detect the invariant manifolds of regular $n$-gons.  In Section  6 we set up the $3$-body problem and reduce the dynamics to a system of 6 degrees of freedom. In the case of equal masses, we study  the stability of equilateral configurations REs and find that with the class of attractive potentials, these are generically unstable.
We conclude with some remarks and open questions.

%{sect:DR}

%{sect:3-body}

\section{Geometric set-up}\label{Geo}

The Lie group $SO(4)$ has a standard action on $\mathbb{R}^4$ and further acts on $T^*\mathbb{R}^4 \simeq \mathbb{R}^4 \times \mathbb{R}^4$ via the co-tangent lift $(R\,, (\q, \p))) =  (R\q, R\p)$ for any $R\in SO(4)$ and $(\q,\p) \in T^*\mathbb{R}^4.$
%We refer to elements of $SO(4)$ as ``rotations''.
The diagonal action of $SO(4)$ on configurations of n bodies in $\R^{4n}$
is $R(\q_1,\dots,\q^n) = (R\q_1,\dots,R\q^n)$; this too has a co-tangent lifted action 
on $T^*\R^{4n}$.
We consider $n$-body problems with $SO(4)$ symmetry, 
arising from $SO(4)$-invariant potentials $U:\mathbb{R}^4 \to \mathbb{R}$.
Note that any $U$ that depends only on pairwise distances has this property.

Given any $R \in SO(4)$ it is always possible to find an orthogonal change of coordinates
that transforms $R$ to
\begin{equation}
%S_{\left(\theta_1, \theta_2\right)}
S := \left[\begin{array}{cccc}
\cos \theta_1  & -\sin \theta_1 &0 & 0\\
\sin \theta_1  & \ph \cos \theta_1 &0 & 0\\
0&0& \cos \theta_2  & -\sin \theta_2\\
0&0& \sin \theta_2  & \ph \cos \theta_2 \\
\end{array}
\right]
,
\label{E:rot}
\end{equation}
for some $\theta_1, \theta_2 \in \R$.
In other words, there exists a $Q \in SO(4)$ such that
$R = Q S Q^t$, with $S$ in the above form
\cite[Ex. 1.15]{Hall15}.
% Direct way: spectral theorem, then take sums and differences of eigenvectors, as in
% \cite{Hall15} exercise, solved in MAT4144. 
% Indirect, fancy schmanzy way:
%Let $T = SO(2)_{xy} \times SO(2)_{zw}$. This is a maximal torus in $SO(4)$ \cite{Still08}.
%https://en.wikipedia.org/wiki/Rotations_in_4-dimensional_Euclidean_space
%Since $SO(4)$ is a compact, connected Lie group, every element of G is conjugate to an element of T.
The form of $S$ in \eqref{E:rot} is a \textit{normal form} for $R$. 
%Recall that eigenvalues are invariant with respect to similarity, so 
Note that $S$ and $R$ have the same eigenvalues.
The normal form is uniquely determined by these eigenvalues,
$\cos \theta_j \pm i \sin \theta_j$ for $j=1,2$,
except for the possible exchange of the two diagonal blocks (if 
$\theta_1$ and $\theta_2$ are unequal).
Exchange of the two blocks in $S$ can be accomplished by the 
special orthogonal transformation $\tau$ defined by 
$\tau(x,y,z,w) = (z,w,x,y)$, with matrix
\begin{align}\label{E:tau}
\left[\begin{array}{cccc}
0  & 0 &1 & 0\\
0  & 0 &0 & 1\\
1 &0& 0  & 0\\
0&1  & 0  & 0 \\
\end{array}
\right].
\end{align}
%Every conjugacy classes of rotation matrices
%contains either one rotation in normal form (if $\theta_1 = \theta_2$) or two (otherwise).

From the normal form, it is clear that $R$ has two mutually orthogonal invariant planes,
and $R$ consists of a rotation in each of these planes. 
%``to two rotations that rotate two planes". 
%two planar rotations.
This is  analogous to Euler's rotation theorem:  in $\mathbb{R}^3$, any rotation is equivalent to a single rotation about some axis that runs through the origin.

Note that in the 3D case, the axis of rotation is uniquely defined if and only if $R$ is not the identity.
In the 4D case, we have a similar situation. Indeed, 
each invariant plane corresponds to a pair of 
complex conjugate eigenvalues of $S$,
$\cos \theta_j \pm i \sin \theta_j$, for $j=1,2$.
Generically these are $4$ distinct eigenvalues in $2$ distinct pairs,
each pair corresponding to an invariant subspace.
It is easily checked that the only exceptions are: 
(i) $\theta_1 = \theta_2 = 0$, i.e. $R = S = Id$; and
(ii) %$\cos \theta_1 = \cos \theta_2 = 0$, i.e. 
$\theta_1 = \pm  \theta_2 = \pm \frac{\pi}{2}$, in which case there are two double eigenvalues
$\pm i$.
In both cases, there exist orthogonal changes of coordinates that move the
invariant subspaces while leaving the form of $S$
unchanged, for example if $\theta_1 = \theta_2 = \pm \frac{\pi}{2}$ then, for any $\alpha \in \R$,
\begin{align}\label{E:noninvplane}
P = 
\left[\begin{array}{cccc}
\cos \alpha  & 0 &-\sin \alpha & 0\\
0  & \cos \alpha &0 & -\sin \alpha\\
\sin \alpha &0& \cos \alpha  & 0\\
0&\sin \alpha & 0  & \cos \alpha \\
\end{array}
\right]
\quad \implies P S P^t = S.
\end{align}

In summary, we have shown the following.

\begin{proposition}\label{principalR}
(i) Any $R \in SO(4)$ is orthogonally similar to the normal form given in \eqref{E:rot},
which is unique except for the possible exchange of the two diagonal blocks.
Two matrices $R_1, R_2 \in SO(4)$
have the same normal form 
(up to a possible exchange of diagonal blocks)
if and only if they are orthogonally similar, i.e. $R_2 = QR_1 Q^t$ for some $Q \in SO(4)$.\\
(ii) The invariant planes of an $R\in SO(4)$ are uniquely defined
unless $R = Id$ or each of the two blocks in \eqref{E:rot}
equals either 
$\left[\begin{array}{cc}
0  & -1\\
1  & 0
\end{array} \right]$
or $\left[\begin{array}{cc}
0  & 1\\
-1  & 0
\end{array} \right]$, 
i.e. $\theta_1 = \pm  \theta_2 = \pm \frac{\pi}{2}$,
in which cases there are an infinite number of invariant planes.
\end{proposition}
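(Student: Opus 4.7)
The plan is to prove (i) via complex diagonalization of $R$, and then to deduce (ii) by enumerating the $2$-dimensional real $R$-invariant subspaces through the complex eigenspace structure.

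For the normal form in (i), I would begin by viewing $R \in SO(4) \subset U(4)$ as a unitary operator on $\mathbb{C}^4$; being unitary, it is diagonalizable with eigenvalues on the unit circle. Because $R$ is real, the spectrum is closed under complex conjugation, so the four eigenvalues (with multiplicity) split into two conjugate pairs $e^{\pm i\theta_1}$ and $e^{\pm i\theta_2}$. For each pair, the sum of the two complex eigenspaces is the complexification of a $2$-dimensional real $R$-invariant subspace $P_j \subset \mathbb{R}^4$, on which $R$ acts as rotation by $\theta_j$. Orthogonality $P_1 \perp P_2$ follows from the orthogonality of eigenspaces of a unitary operator for distinct eigenvalues. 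Concatenating positively oriented orthonormal bases of $P_1$ and $P_2$, and flipping one axis if necessary to enforce $\det Q = +1$, produces $Q \in SO(4)$ with $Q^t R Q = S$ in the normal form \eqref{E:rot}.

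Uniqueness of $S$ modulo the block swap follows because similar matrices share a spectrum, so the normal form is determined by the unordered pair $\{\theta_1,\theta_2\}$, and reordering the pair is exactly the block-swap ambiguity (realized by the transformation $\tau$ of \eqref{E:tau}). For the ``if and only if'' on orthogonal similarity, the forward implication is immediate from spectral invariance; conversely, if $R_1 = Q_1 S Q_1^t$ and $R_2 = Q_2 S Q_2^t$ share a normal form $S$, then $R_2 = (Q_2 Q_1^t)\, R_1 \,(Q_2 Q_1^t)^t$, and after possibly composing with $\tau$ to accommodate a block swap, the conjugator lies in $SO(4)$.

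For (ii), a $2$-dimensional real subspace $V \subseteq \mathbb{R}^4$ is $R$-invariant if and only if its complexification $V^{\mathbb{C}}$ is both $R$-invariant and closed under complex conjugation. Since $R$ is diagonalizable, I would decompose $V^{\mathbb{C}}$ according to the (possibly higher-dimensional) eigenspaces of $R$ on $\mathbb{C}^4$ and impose conjugation invariance. In the generic case of four distinct eigenvalues, the two conjugation-closed choices are exactly the complexifications of $P_1$ and $P_2$, giving uniqueness. When $R = \mathrm{Id}$, every $2$-dimensional subspace is trivially invariant. When $\theta_1 = \pm\theta_2 = \pm\pi/2$, the $+i$- and $-i$-eigenspaces are each $2$-dimensional complex, so any complex line inside the $+i$-eigenspace together with its conjugate gives a new $R$-invariant real plane; an explicit family is witnessed by the matrix $P$ in \eqref{E:noninvplane}.

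The main obstacle, I expect, is the careful enumeration of cases in (ii): one must check that intermediate degeneracies do not produce additional invariant planes, which reduces to verifying that the conjugation constraint rules out mixing a real eigenvector with a single complex eigenvector belonging to a non-real eigenvalue. Handling the orientation of $Q$ in (i) and the bookkeeping of the block-swap ambiguity are secondary technical points that I would not expect to be delicate.
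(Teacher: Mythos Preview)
Your approach matches the paper's, which presents its argument in the paragraphs immediately preceding the proposition: for (i) the paper cites \cite{Hall15} for existence of the normal form (where you supply the direct unitary-diagonalization construction) and deduces uniqueness from the spectrum and the block-swap $\tau$, and for (ii) it likewise analyzes the conjugate eigenvalue pairs $\cos\theta_j \pm i\sin\theta_j$, separating the generic case of four distinct eigenvalues from the degenerate ones and exhibiting the family $P$ of \eqref{E:noninvplane}. Your flagging of the case enumeration in (ii) as the delicate step is apt; the paper dispatches it with ``it is easily checked.''
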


The Lie algebra of $SO(4)$ is $so(4)$, the set of infinitesimal rotations in $\R^4$.
It consists of all skew-symmetric matrices.
%https://en.wikipedia.org/wiki/Skew-symmetric_matrix#Infinitesimal_rotations
There is a normal form for $so(4)$ closely related to the one
above for $SO(4)$:
for any $\xi \in so(4)$, there exists a $Q\in SO(4)$ such that $\xi = Q \hat \om Q^t$, with
\begin{align}\label{E:infrot}
\hat \om =\left[
\begin{array}{cccc}
0 & -\omega_1 &0 & 0\\
\omega_1 & 0& 0 & 0 \\
0 & 0& 0& -\omega_2\\
0 & 0& \omega_2 & 0
\end{array}
\right],
\end{align}
for some $\omega_1, \omega_2 \in \mathbb{R}$\,.
This equation also defines the ``hat'' notation $\hat \om$ for a vector 
$\om = (\omega_1, \omega_2) \in \mathbb{R}^2.$
%From the above discussion, taking $\theta_1$ and $\theta_2$ both equal to $\frac{\pi}{2}$,
%and $\omega_j = \sin \theta_j$ for $j=1,2$, we see that the normal form $\hat \om$
%does \textit{not} have uniquely-defined invariant planes.
This normal form is well-known, however since we are unaware of 
a reference for it, we give a brief proof in (i) below that it is a consequence of Proposition \ref{principalR}.

\begin{proposition} \label{principalom}
(i) Any $\xi \in so(4)$ is orthogonally similar to the normal form given in \eqref{E:infrot}, 
which is unique except for the possible exchange of the two diagonal blocks.
Two matrices $\xi_1, \xi_2 \in so(4)$
have the same normal form 
(up to a possible exchange of diagonal blocks)
if and only if they are orthogonally similar, i.e. $\xi_2 = Q\xi_1 Q^t$ for some $Q \in SO(4)$.\\
(ii) The invariant planes of $\hat\om$ are uniquely defined
unless $\om = (\omega, \omega)$, i.e. $\omega_1 = \omega_2$, in which case 
there is an infinite family of invariant planes.
\end{proposition}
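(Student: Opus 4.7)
The plan is to derive Part (i) from Proposition \ref{principalR} via the exponential map. Given nonzero $\xi \in so(4)$, the curve $t \mapsto R_t := \exp(t\xi) \in SO(4)$ meets the degenerate set of Proposition \ref{principalR}(ii) for only a discrete set of $t \in \R$; fix such a $t$, chosen also so that the two rotation angles of $R_t$ are distinct (the exceptional case $\omega_1 = \omega_2$, in which the angles of $R_t$ are always equal, is handled directly below). By Proposition \ref{principalR}(i), some $Q \in SO(4)$ conjugates $R_t$ into the normal form \eqref{E:rot}, and by (ii) the invariant $2$-planes of $R_t$ are exactly the two principal planes in the $Q$-coordinates. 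Because $\xi$ commutes with $R_t$, it sends $R_t$-invariant planes to $R_t$-invariant planes, and the distinctness of the two rotation angles rules out a swap of these two planes. Hence $Q^t \xi Q$ is block-diagonal with two skew-symmetric $2\times 2$ blocks, necessarily of the form $\begin{pmatrix} 0 & -\omega_j \\ \omega_j & 0 \end{pmatrix}$, which is the normal form \eqref{E:infrot}. In the exceptional case $\omega_1 = \omega_2 = \omega$, i.e.\ $\xi^2 = -\omega^2 I$, the matrix $\xi$ is $\omega$ times a complex structure, and any orthonormal basis adapted to this complex structure puts $\xi$ into normal form.

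For the uniqueness-up-to-block-exchange claim and the converse similarity statement in (i), observe that the multiset of eigenvalues $\{\pm i\omega_1, \pm i\omega_2\}$ of $\hat\om$ is preserved by orthogonal conjugation, so it determines $\{\omega_1, \omega_2\}$ up to block exchange and simultaneous sign flip, the latter itself implementable by $SO(4)$-conjugation with $\mathrm{diag}(1,-1,1,-1)$. The converse is immediate.

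For Part (ii), I would pass to the complexification $\C^4$. A real $\hat\om$-invariant $2$-plane complexifies to a complex $2$-subspace that is $\hat\om$-invariant and stable under complex conjugation, equivalently a direct sum of $\hat\om$-eigenspaces compatible with conjugation. When $\omega_1 \ne \omega_2$ (the four eigenvalues being generically simple), the only conjugation-stable two-term sums are $\ker(\hat\om - i\omega_j I) \oplus \ker(\hat\om + i\omega_j I)$ for $j = 1, 2$, reproducing the two principal planes. When $\omega_1 = \omega_2 = \omega$, the eigenspace $\ker(\hat\om - i\omega I)$ is two-dimensional, and each complex line $\ell$ in it pairs with $\bar\ell$ to yield a distinct real invariant plane; these lines form a $\mathbb{CP}^1$-family.

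The main technical point is the careful choice of $t$ in (i): we need $R_t$ to have unique invariant planes and distinct rotation angles, which requires avoiding the discrete set of degenerate values of Proposition \ref{principalR}(ii) and treating the case $\omega_1 = \omega_2$ by a separate direct argument. Once this is arranged, the passage from $R_t$'s normal form to $\xi$'s normal form is a formal consequence of commutativity and the skew-symmetry of $\xi$.
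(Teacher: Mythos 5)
Your argument is correct in substance but follows a genuinely different route from the paper's. The paper also passes through $\exp(\xi)\in SO(4)$ and Proposition \ref{principalR}, but it then invokes local invertibility of $\exp$ near $0$: it normalizes $\exp(t\xi)$ for small $t$ and identifies $t\xi$ with $Q\widehat{t\om}Q^t$ directly, rather than choosing a generic $t$ and using a commutant argument. Your approach --- pick $t$ so that $R_t=\exp(t\xi)$ has two distinct eigenvalue pairs, note that $\xi$ commutes with $R_t$ and hence preserves each of its (unique) invariant planes, and read off the block form from skew-symmetry --- is more robust, since it never relies on injectivity of $\exp$ and it automatically produces a single $Q$ that works simultaneously for $R_t$ and $\xi$. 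Your part (ii) is also more complete than the paper's: the paper asserts without argument that distinct eigenvalue pairs force uniqueness of the invariant planes, whereas your conjugation-stable-eigenspace analysis actually classifies all real invariant planes and exhibits the $\mathbb{CP}^1$ family in the degenerate case. You are also right to flag the sign ambiguity: the eigenvalues determine $(\omega_1,\omega_2)$ only up to permutation and sign changes, so the paper's claim that the normal form is ``completely determined by the eigenvalues'' is too quick.

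One soft spot in your case division: the generic argument requires a $t$ for which the two rotation angles of $R_t$ are distinct as unordered pairs $\{\pm t\omega_j\}$, and this fails for \emph{every} $t$ not only when $\omega_1=\omega_2$ but whenever $|\omega_1|=|\omega_2|$, in particular when $\omega_1=-\omega_2$. Your fallback does in fact cover this, because the intrinsic characterization of the exceptional case is $\xi^2=-\omega^2 I$, which holds for $\om=(\omega,-\omega)$ as well; but you should state the exceptional case that way rather than as ``$\omega_1=\omega_2$'', which presupposes the normal form you are trying to construct. You should also note that an orthonormal basis adapted to the complex structure $\xi/\omega$ may be negatively oriented; swapping the last two basis vectors restores $Q\in SO(4)$ at the cost of changing the sign of the second block, which still lands in the normal form \eqref{E:infrot}.
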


\noindent
Proof: (i) Since $\exp(\xi) \in SO(4)$,  
Proposition \ref{principalR} implies that $\exp(\xi) = Q S Q^t$
for some orthogonal $Q$ and some $S$ in normal form \eqref{E:rot} for some $\theta_1, \theta_2$.
Let $\om = (\omega_1, \omega_2) = (\theta_1, \theta_2)$ and $\hat\om$ as in \eqref{E:infrot}. Then 
since $S = \exp(\hat\om)$ and
\[
\exp(\xi) = Q S Q^t = Q \exp(\hat\om) Q^t = \exp(Q \hat\om Q^t).
\]
It follows that $\exp(t \xi) = \exp(Q \widehat{t \om} Q^t)$ for all $t$.
Since $\exp$ is locally invertible in a neighbourhood of zero \cite{Hall15},
the above calculation, applied to a small enough $t$, implies that
$\xi = Q \hat \om Q^t,$
so $\xi$ has the normal form \eqref{E:infrot}.
The normal form is completely determined by the eigenvalues, 
which are $\pm i \omega_1$ and $\pm i \omega_2$.
Thus two unequal matrices $\hat \om$ and $\hat \om'$ are 
are orthogonally similar if and only if they differ
by an exchange of diagonal blocks, i.e. $\hat \om = \tau \hat \om \tau^t$
for the orthogonal matrix $\tau$ in \eqref{E:noninvplane}.
\\
(ii): The eigenvalues of $\hat \om$ are $\pm i \omega_1, \pm i \omega_2$.
If these are distinct then each pair uniquely defines an invariant plane. 
If $\omega_1 = \omega_2$, then \eqref{E:noninvplane} gives a change of coordinates
leaving the form of $\hat \om$ unchanged, showing that there is an infinite family
of invariant planes.
$\square$

\begin{remark}\label{torustheorem}
The preceding two propositions concern single elements of $SO(4)$ or $so(4)$.
There is a related result for subgroups of $SO(4)$:
any compact abelian subgroup is conjugate to a subgroup of
$SO_{xy} \times SO_{zw}$. This is a consequence of the Torus Theorem
for Lie groups \cite{Hall15}, and the fact that 
$SO_{xy} \times SO_{zw}$ is a maximal torus in $SO(4)$\cite{Still08}.
%\cite{Hall15}.
%https://en.wikipedia.org/wiki/Rotations_in_4-dimensional_Euclidean_space
%Since $SO(4)$ is a compact, connected Lie group, every element of G is conjugate to an element of T.
\end{remark}

Because of the properties just stated, many questions about $SO(4)$ symmetries
can be reduced to questions about symmetries with respect to 
$SO(2)_{xy} \times SO(2)_{zw}$.
With the exception of Section 4b, throughout the paper  we consider 
the symmetries associated with this group, which we call the
the \textbf{double planar rotation group}.
% is $SO(2)_{xy} \times SO(2)_{zw}$.
The \textbf{principal planes} are $Oxy$ and $Ozw$.
Note that whenever an element $R \in SO(2)_{xy} \times SO(2)_{zw}$
has uniquely defined invariant planes, they equal the principal planes;
this is also true for any $\xi \in so(2)_{xy} \times so(2)_{zw}$.

\smallskip

Any element $\xi$ of the Lie algebra $so(4)$ determines a unique one parameter group
$R(t) := \exp t\xi$, and the derivative of this path at $t=0$ is $\xi$. 
Given a ``base point'' $\q \in \R^4$, it follows that $\xi$ determines a path 
in $\R^4$ given by $\left(\exp t \xi \right) \q$, 
and the derivative of this path at $t=0$ is $\xi \q$ (matrix product),
which is called the \textbf{infinitesimal action} of $\xi$ on $\q$.
Note that if $\xi = \hat \om$ as defined in \eqref{E:infrot}, and
$\q=(x, y, z, w) \in \mathbb{R}^4,$
then
\begin{equation}
 \hat\om \q = 
\left[
\begin{array}{c}
-\omega_1 y\\
\ph \omega_1 x\\
-\omega_2 w\\
\ph\omega_2 z\\
\end{array}
\right].
\label{inf-gen}
\end{equation}
If $SO(4)$, or its subgroup $T=SO(2)_{xy} \times SO(2)_{zw}$, is acting
diagonally on $\R^{4n}$, 
then the infinitesimal action of $\xi$ on $\q = \left(\q_1, \dots, \q_n\right)$ is
$\xi \q = \left(\xi \q_1, \dots, \xi \q_n\right)$.

\smallskip

For any group $G$ acting on a space $X$, 
the \textbf{isotropy subgroup} of any $x \in X$ is
the set of all group elements that leave $x$ fixed, i.e.,
%The isotropy subgroup $G_\q$ of a point $\q \in \mathbb{R}^4$ is the subgroup of $G$ consisting of those transformations fixing $\q$, that is 
%
\begin{equation}
G_x:=\{g \in G\,|\, gx=x\}.
\end{equation}
It is easily seen that if $y=gx$ then $G_y= g G_x g^{-1}$, that is the isotropy subgroups of two points in the same group orbit are conjugate. Thus to each orbit is associated a conjugacy class of subgroups of G, called the \textbf{orbit type} of the orbit. 

\smallskip 
In $SO(4)$, conjugacy is the same as orthogonal similarity. 
In any abelian group, such as the double planar rotation group,
conjugacy leaves every subgroup invariant, i.e. $g G_x g^{-1} = G_x$,
so each orbit type contains only one subgroup.

% NO! orbit type might contain SO(3).
\iffalse
To compute orbit types for the standard action of $SO(4)$ on $\R^4$, it suffices to consider elements  of the double planar rotation group,
$T=SO(2)_{xy} \times SO(2)_{zw}$,
since all elements of $SO(4)$ are
conjugate to one in this smaller group.
Specifically,
\begin{proposition}
Two points $\q_1,\q_2 \in \R^4$ have the same orbit type with respect to $SO(4)$ if and only if 
they have the same orbit type with respect to $T$.
This occurs if and only if either
$T_{\q_1} = T_{\q_2}$, or $T_{\q_1} = \tau T_{\q_2} \tau$, where $\tau$ is the 
matrix in \eqref{E:tau} that exhanges diagonal blocks.
\end{proposition}

\noindent
Proof: 
%Every orbit type is an equivalence class with respect to conjugacy within $SO(4)$,
%i.e. with respect to orthogonal similarity. 
It follows from Proposition \ref{principalR} that the conjugacy class of $G_{\q}$
the orbit of every element of $SO(4)$ intersects $G$ in either a unique $A$ or a
pair of elements $\{A, \tau A \tau\}$. Therefore every orbit type intersects $G$ 
since $\q_1$ and $\q_2$ have the same orbit type if and only $G$...?
\fi

\begin{proposition} The orbit types 
for the standard action of 
$G:=SO(2)_{xy} \times SO(2)_{zw}$ on $\R^4$ are:
\begin{enumerate}
\item  $\q=(0,0,0,0)$ has $T_\q=T$;

\item $\q=(x,y,0,0)$ with $x,y\neq 0$  has $G_\q=SO(2)_{zw}$ (rotations of the $Ozw$ plane);

\item $\q=(0,0,z,w)$ with $z,w\neq 0$ then $G_\q=SO(2)_{xy}=$ rotations of the $Oxy$ plane.

\item  If $\q=(x,y,z,w)$ with $(x,y) \ne (0,0)$ and $(z,w) \ne (0,0)$ then $G_\q=\{\mathbb{I}_4\}$
(where $\mathbb{I}_4$ is the identity matrix).

\end{enumerate}
\end{proposition}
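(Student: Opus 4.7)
The plan is to parameterise $G = SO(2)_{xy} \times SO(2)_{zw}$ in block form and reduce the isotropy computation to two decoupled planar stabilisers. Concretely, every $g \in G$ can be written as the block diagonal matrix with blocks $R(\alpha) \in SO(2)$ acting on the $Oxy$ plane and $R(\beta) \in SO(2)$ acting on the $Ozw$ plane, where $R(\theta)$ denotes the standard $2\times 2$ rotation through angle $\theta$. Writing $\q=(x,y,z,w)$, the action factors as $g\q = (R(\alpha)(x,y),\, R(\beta)(z,w))$, so the fixed point equation $g\q=\q$ splits into the independent equations $R(\alpha)(x,y)=(x,y)$ and $R(\beta)(z,w)=(z,w)$.

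The next step is to recall the elementary fact that a planar rotation $R(\theta)\in SO(2)$ fixes a vector $\v\in\R^2$ if and only if either $\theta\equiv 0\pmod{2\pi}$ or $\v=0$. Given this, I would simply enumerate the four logical combinations of $(x,y)=0$ vs.\ $(x,y)\neq 0$ and $(z,w)=0$ vs.\ $(z,w)\neq 0$. In case 1 both projections vanish, no constraint is imposed on either angle, and $G_\q=G$. In case 2 only $(x,y)\neq 0$, which forces $\alpha\equiv 0$ but leaves $\beta$ arbitrary, yielding $G_\q=\{\mathbb{I}_2\}\times SO(2)_{zw}\cong SO(2)_{zw}$; case 3 is symmetric with the roles of the two planes exchanged. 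In case 4 both projections are nonzero, forcing both $\alpha$ and $\beta$ to be trivial, so $G_\q=\{\mathbb{I}_4\}$.

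There is no real obstacle here: the Abelian product structure of $G$ makes the stabiliser a direct product of the stabilisers on each principal plane, and the statement follows immediately from the $SO(2)$ fact above. The only thing worth stating carefully is that since $G$ is Abelian, each orbit type is a single subgroup (as already noted in the text just before the proposition), so listing the four isotropy groups above does indeed list all the orbit types of the action.
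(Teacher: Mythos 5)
Your argument is correct and complete: the block-diagonal structure of $G$ decouples the fixed-point equation into two planar conditions, and the elementary fact that $R(\theta)\in SO(2)$ fixes a nonzero vector only when $\theta\equiv 0 \pmod{2\pi}$ settles each of the four cases. The paper actually states this proposition without any proof, treating it as immediate, so your write-up supplies exactly the standard argument the authors are implicitly relying on; your closing observation that the abelian structure makes each orbit type a single subgroup matches the discussion in the paragraph preceding the proposition.
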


\begin{remark}
In $SO(4)$, the two isotropy subgroups $SO(2)_{xy}$ and $SO(2)_{zw}$ are conjugate
by the matrix $\tau$ in 
\eqref{E:tau} that exchanges blocks, so these are in the same 
$SO(4)$ orbit type. In fact, 
by Remark \ref{torustheorem}, all compact abelian subgroups of $SO(4)$ 
of a given dimension are 
conjugate to each other, i.e. all copies of $SO(2)$ are conjugate to each other,
and all copies of $SO(2) \times SO(2)$ are conjugate to each other.
%but in $T$ no two of them are, since $T$ is abelian.
In $SO(4)$ there are also non-abelian isotropy subgroups isomorphic to $SO(3)$.
\end{remark}

\iffalse
$\q_1=(x,y,0,0)$ and $\q_2=(0,0,z,w)$ then $d(\q_1, \q_2) =  \sqrt{x^2+y^2+z^2+w^2}$

An isosceles triangle is $\q_1=(r,\theta_1,0,0)$ and $\q_2=(r,\theta_2, 0,0)$ and $\q_3=(0,0, z,w)$, since $d(\q_1, \q_3)= d(\q_2, \q_3)= \sqrt{r^2+z^2+w^2}.$
\fi

Associated to the cotangent-lifted action of $SO(2)_{xy} \times SO(2)_{zw}$
on $T^*\R^4$ is a \textbf{momentum map} $J: T^*\mathbb{R}^4 \to so(2)^* \times so(2)^*$
defined by
$\left<  J(\q, \p),  \xii \right> = \left< \p,  \xii \cdot \q  \right>$.
A straightforward calculation shows:
\begin{align}
&
J(\q, \p)=
\left[
\begin{array}{cccc}
0 & -( p_y  q_x-p_xq_y) &0 & 0\\
(p_y  q_x-p_xq_y) & 0& 0 & 0 \\
0 & 0& 0& -(p_wq_z-p_zq_w)\\
0& 0& (p_wq_z-p_zq_w)  & 0
\end{array}
\right] 
\nonumber 
\end{align}
or,  using the identification $so(2)^* \times so(2)^* \simeq \mathbb{R} \times \mathbb{R}$, %
\begin{align}
 J(\q, \p) = J(q_x, q_y, q_z, q_w, p_x, p_y, p_z, p_w) = (p_y  q_x-p_xq_y, p_wq_z-p_zq_w)\,.
\label{one_mom_map}
\end{align}
If we pass to the \textbf{double-polar coordinates} $(r_1, \theta_1, r_2, \theta_2)$ defined by
\begin{align}
&x= r_1 \cos \theta_1, \quad y= r_1 \sin \theta_1\\
&z= r_2 \cos \theta_2, \quad w= r_2 \sin \theta_2
\end{align}
then the momentum map reads
\begin{align}
 J(\q, \p) = J(r_1, \theta_1, r_2, \theta_2, p_{r_1}, p_{\theta_1}, p_{r_2}, p_{\theta_2})=(p_{\theta_1}, p_{\theta_2}).
\label{two_mom_map}
\end{align}
Let $\muu = (\mu_1, \mu_2) \in so(2)^*\times so(2)^*$  be  fixed momentum value.  For future reference we note that the isotropy group of $\muu$ in the double planar group is given by 
\begin{align*}
\{R \in SO(2)_{xy}\times SO(2)_{zw}\,|\, \text{Ad}^*_{R^{-1}}\hat \muu=\hat \muu\} 
&= \{\hat \muu \in SO(2)_{xy}\times SO(2)_{zw}\,|\, R \hat \muu R^{-1}=\hat \muu\}  \\
&= SO(2)_{xy}\times SO(2)_{zw}.
\end{align*}
\section{Central force problem}\label{sect: central}

Consider the motion of a unit  mass point in $\mathbb{R}^4$ under the influence of a source field located in the origin.  The Hamiltonian is 
\begin{align}
H(\q, \p) =  \frac{1}{2}\p^2 + V(\|\q\|)
\label{orig_Ham}
\end{align}
where $V: D\subseteq \mathbb{R} \to \mathbb{R}$ 
%where $D$
%:=\mathbb{R}^4 \setminus\{\text{possible collisions}\}$ 
is some smooth   potential   depending only on the distance to the origin,
and $D$ some subset of $\mathbb{R}$.
  In  double-polar coordinates we have%
\begin{align}
&H(r_1, \theta_1, r_2, \theta_2, p_{r_1}, p_{\theta_1}, p_{r_2},  p_{\theta_2}) =  \frac{1}{2}\left(p_{r_1}^2 +\frac{p_{\theta_1}^2}{r_1^2} + p_{r_2}^2 +\frac{p_{\theta_2}^2}{r_2^2} \right)+  V\left(\sqrt{r_1^2+r_2^2}\right)\,.
\label{Ham_1}
\end{align}
where $D_1\times D_2 \subseteq (0, \infty) \times  (0, \infty)$  denotes the domain where $V$ is well-defined. Recall that we consider as symmetry group the subgroup $SO(2)_{xy}\times SO(2)_{zw}$. Since the Hamiltonian is invariant under its action, by Noether theorem, we  
obtain  the conservation of the angular momentum, that is
\[\p(t)=(p_{\theta_1}(t), p_{\theta_2}(t))=const.=\muu=(\mu_1,\mu_2)
\]
along any solution. 
We obtain the reduced 2-degrees of freedom system with the Hamiltonian 
\begin{align}
H_{\text{red}}(r_1, p_{r_1}, r_2, p_{r_2}) =  \frac{1}{2}\left(p_{r_1}^2 + p_{r_2}^2 \right)+\frac{\mu_1^2}{2r_1^2}+\frac{\mu_2^2}{2r_2^2}  +V\left(\sqrt{r_1^2+r_2^2}\right)
\label{Ham_2}
\end{align}
coupled with the reconstruction equations:
\begin{align}
\dot \theta_1= \frac{\mu_1^2}{r_1^2(t)}, \,\,\,\,\, \dot \theta_2= \frac{\mu_2^2}{r_2^2(t)}\,.
\label{reconstruction}
\end{align}
\begin{remark}
For $\muu=(\mu_1, \mu_2)\neq \bf{0},$ from the conservation of angular momentum it follows that the projections of the motion on each of the principal planes 
 obey the area laws, and the ratio is these areas is constant. Specifically, let $\displaystyle{A_{xy}(t)=\frac{1}{2}r_1^2(t) \dot \theta_1}$ and $\displaystyle{A_{zw}(t)=\frac{1}{2}r_2^2(t) \dot \theta_2(t)}$ be the  projected areas. Then on each plane we have equal areas in equal times and 
\begin{equation}
\frac{A_{xy}(t)}{A_{zw}(t)} =\left(\frac{\mu_1}{\mu_2}\right)^2  \,\,\,\,\text{for all}\, \,\,t.
\end{equation}
\end{remark}
\begin{remark}(The harmonic oscillator)
For potentials of the form $\displaystyle{V\left(\sqrt{r_1^2+r_2^2} \right) = k\left(r_1^2+r_2^2 \right)}$, $k \in \mathbb{R},$ the system decouples and it is integrable. 
%For more, see Cushman book...
\end{remark}
Since the Hamiltonian is time-independent, the energy is conserved, and so along any solution
$H_{\text{red}}\left(r_1(t), p_{r_1}(t), r_2(t), p_{r_2}(t) \right) =$constant.
The equations of motion read:
\begin{align}
&\dot r_1=p_{r_1}, \,\,\,\,\dot p_{r_1}=\frac{\mu_1^2}{r_1^3}- V'\left(\sqrt{r_1^2+r_2^2}\right)\frac{r_1}{\sqrt{r_1^2+r_2^2}}\\
&\dot r_2=p_{r_2}, \,\,\,\,\dot p_{r_2}=\frac{\mu_2^2}{r_2^3}- V'\left(\sqrt{r_1^2+r_2^2}\right)\frac{r_2}{\sqrt{r_1^2+r_2^2}}\,.
\end{align}
We define the \textit{effective} (or amended) potential
\[\tilde{V}_{\mu_1, \mu_2} (r_1, r_2):=\frac{\mu_1^2}{2r_1^2}+\frac{\mu_2^2}{2r_2^2}  +V\left(\sqrt{r_1^2+r_2^2}\right)
\]
Given the conservation of energy  and since the kinetic energy $\displaystyle{\frac{1}{2}\left(p_{r_1}^2 + p_{r_2}^2 \right)}$ is positive, for a fixed level of energy $h$ we retrieve the allowed (Hill) regions of motion 
\[\mathcal{R}_h(\mu_1, \mu_2) :=\left\{(r_1, \theta_1, r_2,  \theta_2) |\,\tilde{V}_{\mu_1, \mu_2} (r_1, r_2) \leq h\right\}.
\]
For example, if for some $h_0$ fixed, the set $\{(r_1, r_2)\,|\,\tilde{V}_{\mu_1, \mu_2}(r_1, r_2)\leq h_0\} = [a_1, b_1] \times  [a_2, b_2]$, $a_j, b_j \in \mathbb{R}$, $j=1,2$, then all trajectories are bounded and belong to $([a_1, b_1] \times S^1) \times ([a_2, b_2] \times S^1)$ i.e. the product of  two annular regions. 

\subsection{Some invariant manifolds}\label{subsect: inv}

In cartesian coordinates, we observe the sets 
\begin{align}
&{\cal P}_{xy}:=\{(x, y, z, w, p_x, p_y, p_z, p_w) \,|\, z=w=p_z=p_w=0 \} \\
&{\cal P}_{zw}:=\{(x, y, z, w, p_x, p_y, p_z, p_w) \,|\, x=y=p_x=p_y=0 \}
\end{align}
are invariant under the dynamics induced by \eqref{orig_Ham} and that  on each of these the dynamics is given by the standard  planar central force problem.

\smallskip

Let us write the reduced Hamiltonian \eqref{Ham_2} using polar coordinates $r_1= R \cos {\varphi}, r_2 = R \sin {\varphi}:$
\begin{align}
H_{\text{red}}(R,\varphi,  P_R,  P_{\varphi}) =  \frac{P_R^2}{2}
+\frac{P_{\varphi}^2}{2R^2}
+\frac{\mu_1^2}{2R^2\cos^2 \varphi}+\frac{\mu_2^2}{2R^2\sin^2 \varphi}  + V(R)\,,
\label{Ham_cf_polar}
\end{align}
which holds for all $\varphi \neq 0, \pi/2, \pi, 3\pi/2.$
The equations of motion read
\begin{align}
&\dot R=P_R, \quad \,\,\,\,\,\,\,\,\dot P_R= \frac{P_{\varphi}^2}{R^3} + \left(\frac{\mu_1^2}{\cos^2 \varphi}+ \frac{\mu_2^2}{\sin^2 \varphi}   \right)\frac{1}{R^3} - V'(R),\\
&\dot \varphi =  \frac{P_{\varphi}^2}{R^2},  \quad \,\,\,\, \,\,\,\,\dot P_{\varphi}= \left(\frac{\mu_1^2 \sin \varphi}{\cos^3 \varphi} - \frac{\mu_2^2 \cos \varphi}{\sin^3 \varphi}  \right)\frac{1}{R^2}.
\end{align}
For all  non-zero momenta $\muu=(\mu_1, \mu_2) \neq \bf{0}$,  we find that the sets
\begin{align}
{\mathcal M}_{\pm \varphi}:=
\begin{cases}
      & \left\{(R,\varphi,  P_R,  P_{\varphi})\,\Big|\, \tan \varphi = \pm \sqrt{\frac{\mu_2}{\mu_1}}, \,P_{\varphi}=0 \right\},
       \,\,\,\,\text{if}\,\,\,\mu_1\neq0,\,\\
      \\
      &\left\{(R,\varphi,  P_R,  P_{\varphi})\,\Big|\, \tan \varphi = \pm \sqrt{\frac{\mu_1}{\mu_2}}, \,P_{\varphi}=0 \right\}, \,\,\,\,\text{if}\,\,\,\mu_2\neq0
\end{cases}
\end{align}
are  invariant manifolds on which the dynamics is given by the one-degree of freedom system with the Hamiltonian
\begin{align}
\tilde H (R, P_R) =  \frac{P_R^2}{2}
+\frac{\left(\mu_1 +\mu_2 \right)^2}{2R^2} + V(R)\,.
\label{Ham_polar_1_def}
\end{align}
Given that these are  motions  with constant ratio $r_1(t)/r_2(t)$ of the polar radii we introduce
\begin{definition}
Motions with a  constant ratio of the polar radii, that is $r_1(t)= \lambda r_2(t)$ for some $\lambda >0$ and  all $t$, are called  \textbf{proportional motions}.
\end{definition}
Thus we can state that proportional motions form an invariant manifold.
%
%
%only  the distance from the mass point to the origin varies while not changing the direction; in this respect, these could be thought of as  collinear motions in $\mathbb{R}^2$;  we call the dynamics on $M_{i}$ \textit{pseudo-collinear} motions.
Since the system associated to $\tilde H$ had  one-degree of freedom, it is integrable. 
For future reference we introduce
\begin{definition}
A potential $V: D \subset \mathbb{R} \to \mathbb{R}$ is \textbf{attractive} if $V'(\|\q\|)\geq 0$ for all $\q.$
\end{definition}
We also note that 
for any attractive  potential  $V$, $V(R)\neq -1/R^2$, the  equilibria of \eqref{Ham_polar_1_def} are given by
\[
(R, \varphi, P_R, P_\varphi) = \left(R_0, \pm \sqrt{\frac{\mu_2}{\mu_1} },0,0 \right)
\]
where $R_0$ is the root of $ R^3 V'(R)=(\mu_1+\mu_2)^2.$
(As the reader can easily verify, the case of the Jacobi potential $V(R)= -1/R^2$ is degenerate in the sense that either all $R$ are equilibria or there are no equilibria alt all, and will be discussed elsewhere.) Using the reconstruction equations \eqref{reconstruction},  every equilibrium on  ${\mathcal M}_{\pm \varphi}$ corresponds to  orbits  that either are  quasi-periodic, or densely fill in a torus.

\smallskip

Various choices for $V$ lead to different problems.  Below we recall the Kepler problem and find the necessary and sufficient conditions for collisional motion.

\subsection{The Kepler problem}

In this Subsection we consider the classical  Newtonian potential in $\mathbb{R}^3$, that is $V(\q) = -k/\|\q\|,$ $k>0$. As mentioned in the introduction, while this is not the solution of the Laplace equation in $\mathbb{R}^4$ and so from a physical standpoint this is not the $\mathbb{R}^4$-gravitational potential, the induced dynamics is interesting from a theoretical standpoints. 
%Thus this potential was used for instance in \cite{PJ80, PJ81a, PJ81b, AC98, OV06, Ch13}. 
%
\begin{align}
H(\q ,\p) =  \frac{1}{2}\p^2  -\frac{k}{\|\q\|}\,,\,\,\,\,k>0\,.
\label{Ham_Newton_1}
\end{align}
Following \cite{OV06} in this case the  Laplace-Runge-Lenz vector 
\[
\A:=\left(\p^2-\frac{1}{\sqrt{r_1^2+r_2^2}}\right)\q - (\q \cdot \p)\p
\]
is a conserved quantity. Since  it provides 4 integrals of motion, the dynamics drops to an integrable system. 
The dynamics resembles the Kepler problem in $\mathbb{R}^3$; for instance, for allowed negative energies $h<0$, all orbits are ellipses $\displaystyle{ \Theta  \to  \frac{p}{ 1+\epsilon \cos \Theta}}$ where $\displaystyle{p:=\frac{\mu_1^2+\mu_2^2}{k}},$ $\displaystyle{\epsilon:= \sqrt{1+\frac{2h(\mu_1^2+\mu_2^2) }{k^2}}}$ and $\Theta$ is the angle between   $\q$ and $\A;$ for details, see \cite{OV06}.
\begin{proposition}
In  the Kepler problem in $\mathbb{R}^4$,  the motion is collisional  iff $\muu=(\mu_1, \mu_2)=\bf{0}.$
\end{proposition}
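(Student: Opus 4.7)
The plan is to prove both directions separately, using energy conservation for the forward implication and the fact that zero angular momentum confines the motion to a plane through the origin for the converse.

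For the forward direction, suppose a trajectory collides, i.e.\ $\|\q(t)\| \to 0$ as $t \to t^*$ for some finite $t^*$, which means $r_1(t),r_2(t) \to 0$. I would argue by contradiction: assume $\muu \neq \mathbf{0}$, say $\mu_1 \neq 0$ (the case $\mu_2 \neq 0$ being symmetric). The key estimate is that $\sqrt{r_1^2+r_2^2} \geq r_1$, so
\[
-\frac{k}{\sqrt{r_1^2+r_2^2}} \geq -\frac{k}{r_1}.
\]
Combining with the non-negativity of the kinetic terms and the second centrifugal term in the reduced Hamiltonian \eqref{Ham_2}, I get
\[
H_{\text{red}}(r_1,p_{r_1},r_2,p_{r_2}) \;\geq\; \frac{\mu_1^2}{2 r_1^2} - \frac{k}{r_1}.
\]
As $r_1 \to 0$ the right-hand side tends to $+\infty$, contradicting the constancy of $H_{\text{red}}$ along the orbit. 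Hence $\muu = \mathbf{0}$.

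For the converse, assume $\mu_1=\mu_2=0$. By the reconstruction equations \eqref{reconstruction}, $\dot\theta_1 = \dot\theta_2 = 0$, so both polar angles are constant along the motion; call them $\theta_1^0,\theta_2^0$. The trajectory therefore lies inside the two-dimensional plane
\[
\Pi := \{(r_1 \cos\theta_1^0,\, r_1 \sin\theta_1^0,\, r_2 \cos\theta_2^0,\, r_2 \sin\theta_2^0) : r_1,r_2 \geq 0\}
\]
through the origin, and within $\Pi$ the Hamiltonian is the standard planar Kepler Hamiltonian $\tfrac12(p_{r_1}^2+p_{r_2}^2) - k/\sqrt{r_1^2+r_2^2}$ with zero angular momentum. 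This one-degree-of-freedom radial problem is classical and admits collisional solutions, e.g.\ release from rest at any point of $\Pi$ leads to collision in finite time. This produces an orbit of the $\R^4$ system that is collisional.

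The forward step is the main content; the reverse is routine once the motion is confined to $\Pi$. The only subtle point is ensuring the centrifugal estimate dominates the Newtonian attraction, which succeeds precisely because the singularity of $V(r)=-k/r$ is of order $1/r$, weaker than the $1/r_j^2$ barriers. This is exactly why the argument fails for the Jacobi potential $V(r)=-k/r^2$, where the attractive term scales identically to the barriers, accounting for the exception mentioned in the introduction.
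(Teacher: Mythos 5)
Your proof is correct and follows essentially the same route as the paper: the forward direction is the paper's energy-conservation contradiction, and your estimate $\frac{\mu_1^2}{2r_1^2}-\frac{k}{r_1}\to+\infty$ makes explicit the blow-up of the left-hand side that the paper merely asserts. The converse also matches in spirit (reduction to the zero-angular-momentum radial problem), with your release-from-rest orbit being a slightly cleaner way to exhibit a collision than the paper's remark that $r_i(t)$ eventually becomes decreasing.
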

\noindent Proof:  For the classical Newtonian potential, the energy conservation 
reads
\begin{align}
\frac{1}{2}\left(p_{r_1}^2(t) + p_{r_2}^2(t)  \right)+\frac{\mu_1^2}{2r_1^2(t) }+\frac{\mu_2^2}{2r_2^2(t)}  -\frac{k}{\sqrt{r_1^2(t) +r_2^2(t) }}=const.=h.
\label{Ham_Newton_cons}
\end{align}
Let $\mu_1^2+\mu_2^2>0$ and let us assume that $\lim \limits_{t \to t^*} \left(r_1^2(t) + r_2^2(t)\right) = 0$ for some $t^* \leq \infty$. Then at least one of $\mu_1$ or $\mu_2$  is non-zero; without loosing generality, say $\mu_1^2\neq0.$ As $t \to t^*$, the left hand side of  \eqref{Ham_Newton_cons} tends to $\infty$, whereas the right hand side  is the finite energy, which is a contradiction.

If $\mu_1=\mu_2=0$ then the reduced Hamiltonian \eqref{Ham_2} (with Newtonian interaction) becomes
\begin{align}
H_{\text{red}}(r_1, p_{r_1}, r_2, p_{r_2}) =  \frac{1}{2}\left(p_{r_1}^2 + p_{r_2}^2 \right) -\frac{k}{\sqrt{r_1^2+r_2^2}}\,.
\label{Ham_Newton_5}
\end{align}
from where we have  the equations of motion
\begin{align}
&\dot r_i=p_{r_i} \\
&\dot p_{r_i}=- \frac{r_1}{(r_1^2+r_2^2)^{3/2}}, \,\,\,\, i=1,2\,.
\end{align}
It is immediate that $r_i(t)$ eventually becomes decreasing for all $t$ greater than some $t^*.$ $\square$

\begin{remark}
The same result is valid if $V(r)=-1/r^{\alpha}$ with $\alpha<2.$
\end{remark}

\begin{proposition}
In the Kepler problem in $\mathbb{R}^4$, collisional orbits  are proportional motions. 
\end{proposition}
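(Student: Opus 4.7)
The plan is to leverage the previous proposition, which asserts that a collisional Kepler orbit necessarily has $\muu = (\mu_1, \mu_2) = \mathbf{0}$, and then to show that this vanishing forces the orbit to satisfy $r_1 = \lambda r_2$ for a positive constant $\lambda$. Substituting $\mu_1 = \mu_2 = 0$ into the reduced equations of motion for the Newtonian potential leaves
\[
\ddot r_1 = -\frac{k\, r_1}{(r_1^2+r_2^2)^{3/2}}, \qquad \ddot r_2 = -\frac{k\, r_2}{(r_1^2+r_2^2)^{3/2}}.
\]

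The key observation is that these equations admit an additional conserved quantity beyond the energy. Cross-multiplying gives $r_1\ddot r_2 - r_2\ddot r_1 = 0$, which integrates to
\[
C := r_1\dot r_2 - r_2\dot r_1 = \text{const}.
\]
This $C$ is, in essence, an angular-momentum-like first integral for the planar reduced system in the variables $(r_1, r_2)$, and the geometric content of the proposition is that $C$ must vanish on any collisional orbit.

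To prove $C = 0$, I would combine Lagrange's identity $\dot r_1^2 + \dot r_2^2 = \dot r^2 + C^2/r^2$, where $r := \sqrt{r_1^2+r_2^2}$, with the energy integral $\tfrac{1}{2}(\dot r_1^2 + \dot r_2^2) - k/r = h$ to obtain
\[
\dot r^2 = 2h + \frac{2k}{r} - \frac{C^2}{r^2}.
\]
Non-negativity of $\dot r^2$ forces $C^2 \le 2h r^2 + 2k r$, and since $r\to 0$ along the collisional orbit, the right-hand side tends to $0$, compelling $C = 0$.

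Once $C = 0$ is established, $\frac{d}{dt}(r_1/r_2) = -C/r_2^2 = 0$ wherever $r_2 \neq 0$, so the ratio is a positive constant $\lambda$, yielding $r_1(t) = \lambda\, r_2(t)$ and identifying the orbit as a proportional motion in the sense introduced earlier. The only delicate step is forcing $C=0$: one must confirm that $r$ genuinely decreases to zero along the collisional branch and that the bound $C^2 \le 2h r^2 + 2k r$ is uniform in $t$, but both are immediate from the collision hypothesis and the constancy of $C$, so no further analysis is required.
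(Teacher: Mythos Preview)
Your proof is correct and is essentially the paper's argument in different coordinates: the paper passes to polar coordinates $r_1 = R\cos\varphi$, $r_2 = R\sin\varphi$, recognizes the $\muu=\mathbf{0}$ reduced Hamiltonian as the classical planar Kepler problem, and then cites the well-known fact that Kepler collisions are radial ($\varphi$ constant, hence $r_1/r_2$ constant). Your conserved quantity $C = r_1\dot r_2 - r_2\dot r_1$ is precisely $P_\varphi = R^2\dot\varphi$ in those coordinates, and your energy-bound argument forcing $C=0$ is exactly the standard proof of that cited fact, so the two proofs coincide in substance with yours being slightly more self-contained.
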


\noindent 
Proof: For $\mu=\bf{0},$  in polar coodinates $r_1=R \cos \varphi, r_2= R \sin \varphi$, the  Hamiltonian \eqref{Ham_Newton_5}  reads  
\begin{align}
H_{\text{red}}(R, \varphi, P_R, P_{\varphi}) =  \frac{P_R^2}{2} + \frac{P_{\varphi}^2}{2R^2}-\frac{k}{R}
\label{Ham_Newton_2}
\end{align}
and so it is identical to the classical planar  Kepler problem in polar coordinates.  For the latter,  it is known  that collision is attained only by motions on a straight line, that is those with for $\varphi(t)= $ constant $ =:\varphi_0$. 
Thus  along a collision path in $\mathbb{R}^4$, since  $\tan \varphi_0 =   r_1(t) /  r_2(t)$, the conclusion follows. $\square$

\begin{remark}
Along a collisional paths the coordinates  of the mass point is given by
\begin{align}
x(t)= a R(t), \,\,y(t)= b R(t), \,\,z(t)= c R(t), \,\,w(t)= d R(t), 
\end{align}
where $R(t)$ solves \eqref{Ham_Newton_2} and the constants $a,b,c,d$ are determined by the initial conditions. All collisional paths are collinear.

\end{remark}

%%%%%%%%%%%%%%%%%%%%%%%%%%%%%%%%
\iffalse
\begin{proposition}(Pseudo-collinear motions)
For all $h<0,$ all propostional motions are quasi-periodic.
\end{proposition}
%
\begin{align}
\tilde H (R, P_R) =  \frac{P_R^2}{2}
+\frac{\left(\mu_1 +\mu_2 \right)^2}{2R^2} -\frac{\gamma}{R}\,.
\label{Ham_cf_polar}
\end{align}
\fi
%%%%%%%%%%%%%%%%%%%%%%%%%%%%%%%%%%%

\section{The  n-body problem}\label{n-body_sect}

\subsection{Generalities}

Consider  $n$ points with masses $m_1, m_2, \ldots, m_n$ in $\mathbb{R}^4$ with  mutual interaction  via  some potential.  Their positions  is given by  
 $\q=(\q_1,\q_2, \ldots \q_n) \in Q \subseteq \mathbb{R}^{4n}$ on which  the symmetry group $SO(2) \times SO(2)$ acts diagonally on the principal planes.  Further, the group acts  on $TQ$ and  $T^*Q$ by tangent and co-tangent lift, respectively. The masses (or weights) of the points induce the \textit{mass metric} on $Q$
\begin{align}
\ll \dot \q\,, \dot \r  \gg := \dot \q^t\,\mathbb{M}\, \dot \r\,,\,\,\,\,\, \dot \q, \dot \r \in T\mathbb{R}^{4n}
 \label{Ham_gen_1}
\end{align}
where $\mathbb{M} =\text{diag}(m_1 \mathbb{I}_4 , m_2 \mathbb{I}_4, \ldots, m_n \mathbb{I}_4)\,\,\,\, i=1,2,\ldots, n$, where $\mathbb{I}_4$ is the $4\times 4$ identity matrix, is the \textit{mass matrix}.
 The dynamics is given by the Lagrangian
\begin{align}
L(\q, \dot \q) &=  \frac{1}{2} \dot \q^t \mathbb{M} \,\dot \q - U(\q), \quad \textrm{where}
\label{Lag_gen_1} \\
U(\q)&:= \sum \limits_{1\leq j<k\leq n} V(\|\q_j-\q_k\|).
\label{Lag_gen_2}
\end{align}
 Given the invariance of $L$ to translations, one may prove that the linear momentum is conserved.
Thus,  since  the centre of mass has a rectilinear and uniform motion, we choose without loosing generality the location of the centre of mass to be  the origin. 
\begin{remark}(Collinear, planar and spatial  motions)
Any line in $\mathbb{R}^4$ is an invariant manifold for the dynamics. More precisely if initially all points are on a line with velocities  tangent to that line, then the points will remain on that line at all times. Similar statements are valid for motions in a plane ($\mathbb{R}^2$) or hyperplane ($\mathbb{R}^3$).

\end{remark}
%
%
%\begin{remark}(Planar motions)
%If the initial conditions are such that  all points are in one of the principal planes with velocities tangent to that plane, then the motion will stay on that principal plane at all times.
%\end{remark}

%\begin{remark}(Three body motion)
% Three linearly independent vectors $\q_1, \q_2, \q_3$ (in $\mathbb{R}^4$) determine a plane.  
%\end{remark}
In Hamiltonian formulation,  the dynamics is given by
\begin{align}
H(\q, \p) &=  \frac{1}{2} \p^t \mathbb{M}^{-1} \p + U(\q),\, \quad \textrm{i.e.}\\
H(\q, \p) &=  \frac{1}{2} \p^t \mathbb{M}^{-1} \p + \sum \limits_{1\leq j<k\leq n} V(\|\q_j-\q_k\|)\,
\label{Ham_gen}
\end{align}
for some smooth $v: D\subseteq \mathbb{R} \to \mathbb{R} $
and where the momenta are denoted $\p=(\p_1, \p_2, \ldots, \p_n)$ with $\p_j=(p_{jx}, p_{jy},p_{jz},p_{jw}),$ $j=1,2,\ldots,n.$ The energy is given by the Hamiltonian $H(\q, \p)$ and it is conserved along any solution. The (angular) momentum map is 
\begin{align}
J(\q, \p):= \left( \sum \limits_{j=1}^n (p_{jy}q_{jx} - p_{jx}q_{jy})\,,\sum \limits_{j=1}^n (p_{jw}q_{jz} - p_{jz}q_{jw}) \right)\,.
\end{align}
and by Noether theorem, since $H$ is invariant with respect to the $SO(2)_{xy} \times SO(2)_{zw}$ action, $J$ is conserved as well along any solution.
In double-polar coordinates we have
\begin{align}\label{Hdoublepolar}
H&(\q, \p) =  \frac{1}{2} \sum \limits_{j=1}^n\ \left(p_{r_{j1}}^2  + \frac{p_{\theta_1}^2}{m_jr_{j1}^2} +  p_{r_{j2}}^2  + \frac{p_{\theta_2}^2}{m_jr_{j1}^2} \right) \\
&\quad  \,\,\,\,+ \sum \limits_{1\leq j<k\leq n} 
V\left(\sqrt{r_{j1}^2 + r_{k1}^2 - 2 r_{j1}r_{k1} \cos (\theta_{j1}- \theta_{k1}) +   r_{j2}^2 + r_{k2}^2 - 2 r_{j2}r_{k2} \cos (\theta_{j2}- \theta_{k2} }\, \right).
\nonumber
\end{align}
and 
\begin{align}
J(\q, \p):= \left(  \sum \limits_{j=1}^n p_{\theta_{j1}}\,, \sum \limits_{j=1}^n p_{\theta_{j2}} \right)\,.
\end{align}
\subsection{Relative equilibria are the Albouy-Chenciner balanced configurations}
\label{subsect:RE}

In the following two  subsections we are considering the full symmetry group $SO(4)$.
% as its symmetry group $n$-body problem in $\mathbb{R}^4$  with $SO(4)$ as its symmetry group.
% In this section we are considering the $n$-body problem in $\mathbb{R}^4$  with $SO(4)$ as its symmetry group.

% in Lagrangian formulation given by \eqref{Lag_gen_1} and

\begin{definition}
A solution $(\q(t), \dot \q(t))$  of the $n$-body problem in $\mathbb{R}^4$ as given by the Lagrangian \eqref{Lag_gen_1}  is a \emph{relative equilibrium solution}  if there is \textit{group velocity} $\hat\om \in so(4)$ so that 
\begin{align}\label{E:REdef}
\q(t) =  \text{exp}(t\hat\om)\q_0,\,\,\, \dot \q(t)= \text{exp}(t\hat\om)\dot \q_0
\end{align}
for some \textrm{base point} $\q_0 \in Q$.
Alternatively, given $\hat\om$, an element $\q_0\in Q$
such that \eqref{E:REdef} is a solution
is called 
a \textbf{relative equilibrium (RE) (with group velocity $\hat\om$)}. 
If instead we use the Hamiltonian formulation \eqref{Ham_gen}, a solution $(\q(t), \p(t))$  is a relative equilibrium solution if there is a $\hat\om \in so(4)$  so that 
\begin{align}
\q(t) =  \text{exp}(t\hat\om)\q_0,\,\,\,\p(t)=\text{FL} \left(\text{exp}(t\hat\om)\dot \q_0\right)
\end{align}
for some and $\q_0\in Q$, where $\text{FL}: TQ \to T^*Q$ is the Legendre transform.
\end{definition}

%Note that if $\q(t) =  \text{exp}(t\hat\om)\q(0)$, then $\dot \q(t) =  \text{exp}(t\hat\om)\dot \q(0)$,

\iffalse
\begin{definition}
A solution $(\q(t), \dot \q(t))$  of the $n$-body problem in $\mathbb{R}^4$ as given by the Lagrangian \eqref{Lag_gen_1}  is a \textbf{relative equilibrium solution}  if there is \textit{group velocity} $\hat\om \in so(4)$ and $(\q_0, \dot \q_0)\in TQ$ so that 
%
\begin{align}
\q(t) =  \text{exp}(t\hat\om)\q_0,\,\,\, \dot \q(t)= \text{exp}(t\hat\om)\dot \q_0
\end{align}
%
The configuration $\q_0$ is called to be a \textbf{base point}, whereas the initial condition $(\q_0, \dot \q_0)$ is called a \textbf{relative equilibrium (RE)}. 
If instead we use the Hamiltonian formulation \eqref{Ham_gen}, a solution $(\q(t), \p(t))$  is a relative equilibrium solution if there is a $\hat\om \in so(4)$ and $(\q_0,  \p_0) \in T^*Q$ so that 
%
\begin{align}
\q(t) =  \text{exp}(t\hat\om)\q_0,\,\,\,\p(t)= \text{exp}(t\hat\om)\p_0.
\end{align}
%
whereas $(\q_0,  \p_0)$ is a RE.

\end{definition}
%
\fi

\begin{remark}\label{REnormalform}
Any relative equilibrium with respect to the $SO(4)$ action
is conjugate to a relative equilibrium with respect to the 
action of the double planar rotation group.
Indeed, by Proposition \ref{principalom}, 
for any relative equilibrium $\q_0$ with group velocity $\hat\om \in so(4)$,
there exists an orthogonal matrix $P$ such that 
$P \hat\om P^t \in SO(2)_{xy} \times SO(2)_{zw}$. 
Then the trajectory $\r(t) := P \exp(t\hat\om) \q_0$
satisfies  $(\r(t), \dot \r(t)) = (P \q(t), P \dot \q(t))$,
so it is also a solution of the $n$-body problem, 
and
\[
\r(t) =  P \exp(t\hat\om) P^t \r(0),
= \exp\left(t \, P \hat \om P^t\right) \r_0
= \exp\left(t \, \widehat{P \om} \right) \r_0,
\]
where $\r_0 := \r(0)$.
So $\r_0$ is a relative equlibrium with group velocity $\widehat{P \om}$.
\end{remark}

The following proposition is standard, see e.g. \cite{Ma92}.

\begin{proposition}  There is a RE with base point $\q_0$ and group velocity $\hat\om \in so(4)$ if and only if
$\q_0$ is a critical point of the augmented potential
\begin{align*}
U_{\hat\om}(\q) &:=  U(\q) - \frac{1}{2} \left(\hat\om \cdot \q \right)^t \mathbb{M} (\hat\om \q),
\end{align*}
%where $\om \cdot \q=\hat \om \q$ is the infinitesimal action of $\om$ on $\q$ and 
$\mathbb{M} =\text{diag}(m_1 \mathbb{I}_4 , m_2 \mathbb{I}_4, \ldots, m_n \mathbb{I}_4)$.
\end{proposition}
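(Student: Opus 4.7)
The plan is to plug the relative equilibrium ansatz directly into the Euler--Lagrange (or equivalently Hamilton's) equations and read off the required critical-point condition. The Euler--Lagrange equations for $L(\q,\dot\q) = \frac{1}{2}\dot\q^t \mathbb{M}\dot\q - U(\q)$ are $\mathbb{M}\ddot\q = -\nabla U(\q)$, and for a candidate trajectory $\q(t) = \exp(t\hat\om)\q_0$ with $\hat\om \in so(4)$ acting diagonally on $\R^{4n}$ we have $\ddot\q(t) = \hat\om^2 \exp(t\hat\om)\q_0$.

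Next I would exploit two structural facts. First, since $\mathbb{M} = \mathrm{diag}(m_1 \mathbb{I}_4,\dots,m_n \mathbb{I}_4)$ is a scalar on each particle's $\R^4$ block, it commutes with $\hat\om$ and with $\exp(t\hat\om)$. Second, the $SO(4)$-invariance of $U$ under the diagonal action implies equivariance of its gradient (computed in the standard Euclidean metric), so $\nabla U(\exp(t\hat\om)\q_0) = \exp(t\hat\om)\,\nabla U(\q_0)$ for all $t$. Substituting both facts into the equation of motion and cancelling the invertible factor $\exp(t\hat\om)$ on the left, the time-dependent ODE collapses to the single algebraic condition
$$
\mathbb{M}\hat\om^2 \q_0 + \nabla U(\q_0) = 0.
$$
Conversely, if this algebraic identity holds at $t=0$, the same two facts show that $\q(t) = \exp(t\hat\om)\q_0$ satisfies the Euler--Lagrange equation for every $t$, so it is a genuine solution and $\q_0$ is a relative equilibrium with group velocity $\hat\om$.

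Finally I would identify the left-hand side with $\nabla U_{\hat\om}(\q_0)$. The extra term in $U_{\hat\om}$ is the quadratic form $\q \mapsto \frac{1}{2}\q^t(\hat\om^t \mathbb{M}\hat\om)\q$. Using that $\mathbb{M}$ commutes with $\hat\om$ and that $\hat\om^t = -\hat\om$, the matrix $\hat\om^t \mathbb{M}\hat\om = -\mathbb{M}\hat\om^2$ is symmetric, so its gradient is $-\mathbb{M}\hat\om^2 \q$. Therefore
$$
\nabla U_{\hat\om}(\q) = \nabla U(\q) + \mathbb{M}\hat\om^2 \q,
$$
and the algebraic condition above is precisely the statement $\nabla U_{\hat\om}(\q_0) = 0$. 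The Hamiltonian version follows at once via the Legendre transform $\p_0 = \mathbb{M}\hat\om\q_0$.

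I do not expect a genuine obstacle here; the only point requiring care is sign bookkeeping when differentiating the quadratic form and using skew-symmetry of $\hat\om$ together with the commutativity $\mathbb{M}\hat\om = \hat\om\mathbb{M}$ to rewrite $\hat\om^t \mathbb{M}\hat\om$ as $-\mathbb{M}\hat\om^2$.
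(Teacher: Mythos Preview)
Your proof is correct. The paper itself does not give a proof of this proposition; it simply cites it as standard (Marsden, \emph{Lectures on Mechanics}). Immediately afterwards the paper carries out essentially your final step, computing $\nabla_j U_{\hat\om}(\q) = \nabla_j U(\q) + m_j \hat\om^2 \q_j$ block by block to obtain the RE condition $\nabla_j U(\q) = -m_j \hat\om^2 \q_j$. Your argument is thus more self-contained than the paper's: you supply the direct verification via the Euler--Lagrange equations that the paper omits, using $SO(4)$-equivariance of $\nabla U$ and the commutation of $\mathbb{M}$ with the diagonal $\hat\om$-action to collapse the time-dependent equation to the single algebraic condition at $t=0$. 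Both routes arrive at the same critical-point characterization; yours just does the work rather than citing it.
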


Note that in \cite{Ma92}, the definition of the augmented potential is more general, 
involving the so-called
\textit{locked inertia tensor}. However for the $n$-body problem, the definition 
reduces to the form shown in the above proposition.

%
%

%
%%%%%%%%%%%%%%%%
\iffalse
By definition,  the \textit{locked inertia tensor}  is the map $\mathbb{I}: \mathbb{R}^{4n} \to \mathit{Lin}(so(2) \times so(2), so(2)^* \times so(2)^*)$  so that $\left<\mathbb{I}(\q)\om, \boldsymbol{\eta} \right>_{so(2)\times so(2)} = \ll \om\cdot \q, \boldsymbol{\eta}\cdot \q \gg$. Using \eqref{inf-gen} we obtain
 %
  %
\begin{align}
\left<\mathbb{I}(\q)\om, \boldsymbol{\eta} \right>_{so(2)\times so(2)} = 
 \sum \limits_{j=1}^n m_j \left( \omega_1 \eta _1 (q_{jx} r_{jx} + q_{jy} r_{jy}  )  + \omega_2 \eta _2 (q_{jz} r_{jz} + q_{jw} r_{jw} )    \right) 
  \end{align}
%
In our case, we are considering the diagonal action of $SO(4)$ on $\R^{4n}$,
so the augmented potential takes the form 
\begin{equation}
V_{\om}(\q) =  V(\q) - \frac{1}{2}  \left(\hat \om \q \right)^t M (\hat \om \q)\label{E:augpot}.
\end{equation}
\fi
%%%%%%%%%%%%%%%%%%%%%%%%%
%
%
%
In our case, we are considering the diagonal action of $SO(4)$ on $\R^{4n}$,
so $\om \cdot \q = \left(\hat \om \q_1, \dots, \hat \om \q_n\right)$, and hence
the augmented potential takes the form 
\begin{align*}
U_{\om}(\q) &=  U(\q) - \frac{1}{2}  \sum_j \left(\hat \om \q_j \right)^t \left(m_j \mathbb{I}_4\right) (\hat \om \q)\\
&= U(\q) - \frac{1}{2} \sum_j m_j \q_j^t\hat \om^t \hat \om \q_j.
\end{align*}
Thus, for all $j=1\dots n$, 
\begin{align*}
\nabla_j U_{\om} (\q)  &= \nabla_j U(\q) -  m_j \hat \om^t \hat \om \q_j \\
&= \nabla_j U(\q) + \hat \om^2 \left( m_j \q_j \right).
\end{align*}
%
%the base points $\q_0$ are determined as solutions of
%
%\begin{align}
%\nabla V(\q) =  - \hat \om^2 \mathbb{M} \q,
%\end{align}
%or equivalently
It follows that $\q$ is a relative equilibrium with group velocity $\hat \om$ if and only if
\begin{align}
\nabla_j U(\q) =  - m_j \hat \om^2 \q_j \quad \textrm{for all } j=1\dots n .\label{E:releq2}
\end{align}
Note that this criterion, which determines a relative equilibrium in the present context,
is equivalent to the following definition introduced by Albouy and Chenciner in \cite{AC98} (see also  \cite{Mo14}):
 \begin{definition} 
 A configuration is \textbf{$4$-balanced}
in $\mathbb{R}^4$ if there is a $4 \times 4$ antisymmetric matrix $A$ such that, for all $j=1\dots n$,
\[
\nabla_j U(\q) =  -m_j A^2  \q_j\, .
\]
\end{definition}
Specifically, a configuration is $4$-balanced if and only if is the base point of a relative equilibrium solution with group velocity given by $A$ in the above definition, i.e. $\hat \om = A$. 
%This connection was noted by Moeckel in his notes \cite{Mo14}.
%

\medskip
Another definition introduced in \cite{AC98} and reproduced in \cite{Mo14} is:
 
\begin{definition}
 A \textbf{central configuration} is an arrangement of the n point masses whose configuration vector satisfies  $\nabla U(\q) +\lambda \mathbb{M} \q=0$
for some real constant $\lambda.$
\end{definition}

This is clearly a special case of the relative equilibrium condition \eqref{E:releq2}, 
with $\hat \om^2 = \lambda \mathbb{I}_4$.
%where $I$ denotes the $4\times 4$ identity matrix.
In fact, as we note below, for $\om = (\omega_1, \omega_2) \in SO(2)_{xy} \times SO(2)_{zw}$,
we have $\hat \om^2 = \mathrm{diag}\left( -\omega_1^2, -\omega_1^2, -\omega_2^2, -\omega_2^2\right)$. Thus
$\q_0$ is a central configuration  if and only if $\q_0$ is a relative equilibrium with 
group velocity 
$\om=(\omega, \omega)$ with $-\omega^2 = \lambda$.

\subsection{Collinear relative equilibria}

An n-point \textit{collinear} configuration, for some $n>1$, is one satisfying
\begin{align} 
\q_j=\lambda_j\q_0, \label{E:collinear}
\end{align}
for some $\lambda_j\neq 0$ for every $j=1,2,\ldots, n$, all distinct, and  $\q_0\in \R^4, \q_0 \ne 0$.
%Without loss of generality, we further assume $\|\q_0\| = 1$.

Recall the relative equilibrium condition \eqref{E:releq2}:
\begin{align*}
\nabla_j U(\q) =  - m_j \hat \om^2 \q_j \quad \textrm{for all } j=1\dots n .
\end{align*}
Given that $U$ has the form $\displaystyle{U(\q) = \sum\limits_{1\leq i<j \leq n} V(\|\q_i - \q_j\|)}$, for a some function $V$, it follows that for every $j$,
\[
\nabla_j U(\q) = \sum_{i\neq j} V'(\|\q_i - \q_j\|) \frac{\q_j - \q_i}{\|\q_i - \q_j\|}.
\]
If $\q$ is collinear, i.e. satisfies \eqref{E:collinear}, then 
\begin{align*}
\nabla_j U(\q) &= \sum_{i\neq j} V'(|\lambda_i - \lambda_j| \|\q_0\|) \frac{(\lambda_j - \lambda_i)\q_0 }{|\lambda_i - \lambda_j| \|\q_0\|} \\
&= \sum_{i\neq j} V'(|\lambda_i - \lambda_j| \|\q_0\|) \,
\mathrm{sign}(\lambda_j - \lambda_i) \frac{\q_0}{\|\q_0\|},
\end{align*}
%
%
%
% NOT: since we are assuming $\|\q_0\| = 1$.
So $\q$ is a relative equilibrium if and only if, for every $j$,
\begin{align}\label{E:REcoll}
\sum_i V'(|\lambda_i - \lambda_j| \|\q_0\|) \,\mathrm{sign}(\lambda_j - \lambda_i)  
\frac{\q_0}{\|\q_0\|}
= - m_j \lambda_j \hat \om^2 \q_0 .
\end{align}
In particular, any collinear relative equilibrium must satisfy: 
$\hat\om^2 \q_0$ is a scalar multiple of $\q_0$.
If $\om = (\omega_1, \omega_2) \in so(2) \times so(2)$
and $\q_0 = \left(\q_0^{xy}, \q_0^{zw}\right)$,
then we have
\begin{align}\label{E:om2}
\hat \om^2 = \left[
\begin{array}{cccc}
0 & -\omega_1 & 0 & 0\\
\omega_1 & 0&0&0\\
0 & 0 & 0& - \omega_2\\
0 & 0 & \omega_2 & 0
\end{array}
\right]^2 =
\left[
\begin{array}{cccc}
-\omega_1^2 & 0 & 0 & 0\\
0 & -\omega_1^2 &0&0\\
0 & 0 & - \omega_2^2 & 0\\
0 & 0 & 0 & - \omega_2^2
\end{array}
\right] 
\end{align}
and
\begin{align}\label{E:om2q0}
\hat\om^2 \q_0 
= \left[\begin{array}{c} -\omega_1^2 \q_0^{xy} \\ -\omega_2^2 \q_0^{zw} \end{array} \right],
\end{align}
which leads us to the following proposition.

\begin{proposition}\label{prop:coll}
Let $\q$ be a collinear configuration, written as above, i.e. 
\[
\q = \left(\q_1,\dots,\q_n\right) = \left(\lambda_1 \q_0,\dots,\lambda_n \q_0\right), %\textrm{ with } q_j = \lambda_j \q_0
\]
%(so in particular $\q$ is nonzero), 
for some distinct nonzero $\lambda_j$'s and some nonzero $\q_0 \in \R^4$,
and let
$\om = \left(\omega_1, \omega_2\right) \in so(2) \times so(2)$.
Then $\q$ is the base point of a RE with group velocity $\hat \om$ if and only if 
%In both cases, the RE condition \eqref{E:REcoll} is equivalent to 
the following scalar condition is satisfied 
\begin{align}\label{E:REcollscalar}
\sum_i V'(|\lambda_i - \lambda_j| \|\q_0\|) \, \mathrm{sign}(\lambda_j - \lambda_i)
= \omega^2 m_j \lambda_j \|\q_0\|
\end{align}
and
one
of the following holds:
\begin{enumerate}
\item $\q_0$ lies in the principal plane $Oxy$, and $\omega = \omega_1$;
\item $\q_0$ lies in the principal plane $Ozw$, and $\omega = \omega_2$;
\item $\q_0$ lies in neither of the principal planes, and $\omega = \omega_1 = \omega_2$.
\end{enumerate}
In all cases, there exists a $2$-dimensional subspace in which the RE remains, i.e.
$\q_j(t)$ remains in the subspace for all $j$ and all $t$. 
In cases (i) and (ii) this is the principal plane containing $\q_0$.
In all cases, the RE has underdetermined group velocity due to isotropy.
In case (i) the group velocity is $\widehat{\left(\omega_1, *\right)}$ for $*$ an arbitrary element of $so(2)$,
while in case (ii) it is
$\widehat{\left(*, \omega_2\right)}$.
%In case (iii), there exists an orthogonal change of coordinates  mapping the subspace onto
%a principal plane. 

In case (iii), 
there are scaled projections of the RE into each of the principal planes that are themselves REs.
Specifically, let $\q_0 = \left(\q_0^{xy}, \q_0^{zw}\right)$, and similarly for each $\q_j$,
%and let $\q^{xy} = \left(\q_1^{xy}, \dots \q_n^{xy}\right)$.
and define 
\[
\mathrm{proj}_{xy} \q = \left(\left(\q_1^{xy}, 0\right), \dots \left(\q_n^{xy}, 0\right)\right)
= \left(\lambda_1 \left(\q_0^{xy}, 0\right), \dots \lambda_n  \left(\q_0^{xy}, 0\right)\right),
\]
and similarly $\mathrm{proj}_{zw} \q$.
%an equivalent condition is that 
%there exists a \textit{planar} relative equilibrium $\r = \left(\lambda_1 \r_0, \dots, \lambda_n 
%\r_0\right)$
%with angular velocity $\omega$, such that $\| \r_0 \| = \| \q_0\|$.
%Note that in this case, $\q^{xy}$ and $\q^{zw}$ are not relative equilibria with angular velocity $\omega$, however their scalar multiples
Then both $\displaystyle{\frac{\|\q_0\|}{\|\q_0^{xy}\|} \mathrm{proj}_{xy} \q }$ and
$\displaystyle{\frac{\|\q_0\|}{\|\q_0^{zw}\|} \mathrm{proj}_{zw}\q }$ are relative equilibria with group velocity 
$\widehat{(\omega, *)}$ or $\widehat{(*, \omega)}$, respectively.
%where we denote by  $\q_0^{xy}$ and $\q_0^{xy}$ the projections of of $\q_0$ on the planes $Oxy$ and $Ozw,$ respectively.
%(Note that $\q_0^{xy}$ and $\q_0^{zw}$ need not be parallel.)

In general, for $\om \in so(4)$, there exists an orthogonal change of coordinates that brings $\omega$ into
$so(2) \times so(2)$, so the above analysis applies. In particular, 
for any collinear RE there exists a $2$-dimensional subspace in which the RE remains.
%FALSE Any two RE having the same magnitude of the group velocity are equivalent by an orthogonal transformation.
\end{proposition}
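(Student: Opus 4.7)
The plan is to start from equation \eqref{E:REcoll}, which was established immediately before the proposition. Its left-hand side is visibly a scalar multiple of $\q_0$, so a necessary condition for a collinear RE is that $\hat\om^2\q_0$ also be a scalar multiple of $\q_0$. Using formula \eqref{E:om2q0},
\[
\hat\om^2 \q_0 = \left(-\omega_1^2\, \q_0^{xy},\; -\omega_2^2\, \q_0^{zw}\right),
\]
this proportionality occurs in exactly three situations: (i) $\q_0^{zw}=0$, (ii) $\q_0^{xy}=0$, or (iii) $\omega_1^2=\omega_2^2$. In each case the factor is $-\omega^2$ with $\omega$ as in the proposition, so matching scalar coefficients of $\q_0$ on both sides of \eqref{E:REcoll} reduces it to the scalar identity \eqref{E:REcollscalar}; sufficiency is immediate by reversing the same computation.

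For the $2$-dimensional invariance claim, in cases (i) and (ii) every $\q_j=\lambda_j\q_0$ lies in the principal plane containing $\q_0$, which is invariant under both $SO(2)$-factors of $\exp(t\hat\om)$ (the other factor fixing its orthogonal complement pointwise). In case (iii), from $\hat\om^2\q_0=-\omega^2\q_0$ the $2$-plane $\mathrm{span}\{\q_0,\hat\om\q_0\}$ is $\hat\om$-invariant, hence $\exp(t\hat\om)$-invariant, and it contains every $\q_j(t)=\lambda_j\exp(t\hat\om)\q_0$.

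The underdetermined group velocity in cases (i) and (ii) follows because in case (i) every $\q_j$ has vanishing $zw$-component, so the $\omega_2$-block of $\hat\om$ acts trivially on the whole configuration and may be replaced by any element of $so(2)$ without altering the trajectory; case (ii) is symmetric. For the projection corollary in case (iii), I would set $\mu=\|\q_0\|/\|\q_0^{xy}\|$ and $\q':=\mu\,\mathrm{proj}_{xy}\q$, and verify \eqref{E:REcollscalar} for $\q'$: its base point $\mu(\q_0^{xy},0)$ has norm exactly $\|\q_0\|$, the coefficients $\lambda_j$ are unchanged, and case (i) applies with the same $\omega$, so the scalar condition for $\q'$ is identical to the one already proved for $\q$. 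The $zw$-projection is analogous.

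For an arbitrary $\om\in so(4)$, Proposition \ref{principalom} yields $P\in SO(4)$ with $P\hat\om P^t\in so(2)_{xy}\times so(2)_{zw}$; conjugating the RE as in Remark \ref{REnormalform} and applying the analysis above to the configuration $P\q$ produces an invariant $2$-plane for the conjugated RE, which pulls back by $P^t$ to an invariant $2$-plane for $\q$. The main obstacle I anticipate is handling case (iii) carefully: one must justify that sign ambiguities in $\omega_1,\omega_2$ (specifically the possibility $\omega_1=-\omega_2$, which also makes $\hat\om^2$ proportional to the identity) are captured by the single $\omega^2$ appearing in \eqref{E:REcollscalar}, so that no case is missed and no spurious cases are added. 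The remainder of the argument is linear algebra together with careful bookkeeping of how the diagonal $SO(4)$ action interacts with the collinearity ansatz $\q_j=\lambda_j\q_0$.
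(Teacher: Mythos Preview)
Your proposal is correct and follows essentially the same route as the paper's proof: reduce \eqref{E:REcoll} to the proportionality condition $\hat\om^2\q_0\parallel\q_0$, split into the three cases via \eqref{E:om2q0}, derive the scalar condition, and then handle the projection claim in case (iii) by the same norm-preserving rescaling. Your argument for the invariant $2$-plane in case (iii) via $\hat\om$-invariance of $\mathrm{span}\{\q_0,\hat\om\q_0\}$ is slightly slicker than the paper's explicit coordinate computation of $\exp(t\hat\om)\q_j$, and you are right to flag the $\omega_1=-\omega_2$ sign ambiguity---the paper's proof (and the proposition as stated) silently collapses this into case (iii), which is harmless since only $\omega^2$ enters \eqref{E:REcollscalar}.
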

%%%%%%%%%%%%%%%%%%%%%%%%%%%
%
%
\noindent
Proof: 
As observed above, the RE condition \eqref{E:REcoll} implies that 
$\hat\om^2 \q_0 = k \q_0$ for some $k \in \R$, and from
\eqref{E:om2q0}, this occurs
if and only if 
one of the following is true:
(i) $\q_0^{xy} = 0$ and $k = -\omega_2^2$;
(ii) $\q_0^{zw} = 0$ and $k = -\omega_1^2$;
(iii) $\q_0^{xy} \ne 0, \q_0^{zw} \ne 0$ and $\omega_1 = \omega_2 =: \omega$ and $k= -\omega^2$.
%In the first two cases, $\q_0$ lies in one of the principal planes.
In all three cases, 
\eqref{E:REcoll} reduces to the scalar condition \eqref{E:REcollscalar},
with $\omega$ defined as in the proposition.

If $\q_0$ is in a principal plane, then each $\q_j$ is also in that plane, and since
the principal planes are invariant under the $SO(2)\times SO(2)$ action, any RE remains in that plane.
Note that if $\q_0 \in Oxy$ then it has isotropy group $SO(2)_{zw}$, so
the second component of the angular velocity $\omega_2$ is undetermined; similarly if $\q_0 \in Ozw$
then $\omega_1$ is undetermined.
%If $\q_0$ is in neither of the principal planes, then we have already seen that $\omega_1= \omega_2 =: \omega$,
%so $\hat \om$ has eigenvalues $\pm \omega$, each of multiplicity $2$, and in particular, no zero eigenvalues. 
%If there were some 3-dimensional subspace $V$ in which corresponding RE remained, then $\hat \om$
%would have a zero eigenvalue with eigenvector in the direction orthogonal to $V$, giving a contradiction.

In case (iii), the RE still remains in a fixed $2$-dimensional subspace. 
This follows from the special form of the group velocity $\om = \left(\omega, \omega\right)$,
which says that both projections of $\q$ onto the principal planes rotate at the same speed.
Indeed, let
$\displaystyle \v_0^{xy} = \left[\begin{array}{cc} 0 & -1 \\ 1 & 0\end{array}\right] \q^{xy}_0$
and $\displaystyle \v_0^{zw} = \left[\begin{array}{cc} 0 & -1 \\ 1 & 0\end{array}\right] \q^{zw}_0$
and $\v_0 = \left(\v_0^{xy}, \v_0^{zw}\right).$
Then for any $t$ and $j$,
\begin{align*}
\exp \left(t \hat \om\right) \q_j &= \lambda_j \exp \left(t \hat \om\right) \q_0
= \lambda_j \left(\exp (t \omega)  \q_0^{xy}, \exp (t \omega) \q_0^{zw}\right) \\
&= \lambda_j 
\left( \cos (t \omega) \q_0^{xy} + \sin (t \omega) \v_0^{xy}, \cos (t \omega) \q_0^{zw} + \sin (t \omega) \v_0^{zw}\right) \\
& = \lambda_j \left(\cos (t \omega) \q_0 + \sin (t \omega) \v_0\right).
\end{align*}
Hence
\begin{align*}
\exp \left(t \hat \om\right) \cdot \q &= 
\left(\lambda_1 \left(\cos (t \omega) \q_0 + \sin (t \omega) \v_0\right), \dots, \lambda_n \left(\cos (t \omega) \q_0 + \sin (t \omega) \v_0\right)\right) \\
& = \cos (t \omega) \q + \sin(t \omega) \v,
\quad \textrm{ where } \v := \left(\lambda_1 \v_0, \dots, \lambda_n \v_0\right).
\end{align*}
Thus the RE remains in the $2$-dimensional subspace spanned by $\q$ and $\v$.

Finally, in case (iii), let $\displaystyle{\r_0 = \frac{\|\q_0\|}{\|\q_0^{xy}\|} \left(\q_0^{xy}, 0\right) }$, and note that
$\|\r_0\| = \|\q_0\|$ and 
$\frac{\|\q_0\|}{\|\q_0^{xy}\|} \mathrm{proj}_{xy} \q = \left( \lambda_1 \r_0, \dots,  \lambda_n \r_0 \right)$.
Therefore the scalar condition \eqref{E:REcollscalar} for $\q$ and $\hat \om$ is
\textit{the same} as one that arises from 
$\frac{\|\q_0\|}{\|\q_0^{xy}\|} \mathrm{proj}_{xy} \q$ and $\widehat{(\omega, *)}$ (where the asterisk
indicates an arbitrary value). A similar argument applies to  the projections into the $zw$ plane.
Hence both $\displaystyle{\frac{\|\q_0\|}{\|\q_0^{xy}\|} \mathrm{proj}_{xy} \q }$ and
$\displaystyle{\frac{\|\q_0\|}{\|\q_0^{zw}\|} \mathrm{proj}_{zw}\q }$ are relative equilibria with group velocity 
$\widehat{(\omega, *)}$ or $\widehat{(*, \omega)}$, respectively.
$\square$

\begin{remark} 
%Consider an $n$-body problem in $\R^4$ with an $2$-dimensional subspace that is invariant under the dynamics.
%Suppose the
%For potentials that depend only on distances between bodies, the restriction of the $n$-body problem in $\R^4$
%to any invariant $2$-dimensional subspace is an $n$-body problem with a potential of the same form.
Since all collinear REs in $\R^4$ remain in a $2$-dimensional subspace,
they have the same configurations as in the corresponding $n$-body problem in $\R^2$.

In $\R^3$, each RE of the $n$-body problem remains in a plane perpendicular to the angular velocity vector.
The situation is almost the same for the $n$-body problem in $\R^4$, 
where  ``angular velocity'' is replaced by ``group velocity''.
Each RE still moves in a fixed plane, however due to isotropy, it is not always possible to determine this plane
from the group velocity $\hat \om$ alone. (Consider the case (iii) above.)
\end{remark}

\iffalse
\begin{remark}
The Newtonian $n$-body problem in $\R^4$ has the same collinear REs as in the classical 
Newtonian $3$-body problem: one RE for each of the $3$ possible orderings of the bodies
along a line.
Any two RE having the same magnitude of angular velocity are 
equivalent by a rotation.
\end{remark}

\begin{remark}
In case the potential admits a collinear \emph{equilibrium}, corresponding to $\omega=0$,
(which is not the case for the Newtonian $n$-body problem, but is true 
for molecular potentials), then the families of relative equilibria described above
collapse to a single equilibrium point, which is thus a bifurcation point.
\end{remark}
\fi

\section{Regular n-gons and  Discrete Reduction}\label{sect:DR}

When the masses are equal, due to the finite symmetries, we are able to detect  low dimensional invariant manifolds using the method of \textit{Discrete Reduction} (reviewed below). These invariant manifolds will consist of equilibria and relative equilibria 
with configurations that are regular $n$-gons, so we begin with a general discussion of these in $\R^4$.
We consider an $n$-gon to be an n-tuple of points, which are the vertices.
By ``regular'' we do not simply mean that the side lengths are equal (which would include e.g. all rhombuses), nor do we
wish to consider only planar shapes. Instead, we follow Coxeter \cite{Cox36}: 
``A polygon (which may be skew) is said to be regular if it
possesses a symmetry which cyclically permutes the vertices (and therefore
also the sides) of the polygon.'' Here, ``skew'' means nonplanar. The symmetry transformation is required to belong to a predefined group;
for example $SO(2)$ gives rise to the usual planar regular polygons, centred at the origin.
The requirement of a \textit{cyclic} permutation means that the symmetry transformation, which we call the \textit{symmetry generator}, must have order equal to the number of points n. (Recall that the \textit{order} of a group element $g$ is the smallest $k$ such that $g^k=e$, the identity.)
%Given a general symmetry group $G$, a polygon is regular if there exists a $g\in G$that cyclically permutes its vertices.
Since we are defining polygons by n-tuples $\left(\q_1,\dots,\q_n\right)$, the order in which the points are listed matters;
but by relabelling them if necessary, we may assume that there exists a symmetry transformation $g$ such that
$g\q_i = \q_{i+1(\mathrm{mod} n)}$ for all $i$.
In summary, we arrive at:

\begin{definition}\label{def:regular}
A \textit{regular $n$-gon} in $\R^d$, with respect to a group of transformations $G$, is a configuration $\q = \left(\q_1,\dots,\q_n\right)$, 
with distinct points $\q_i$, such that there exists 
an element $g\in G$
of order n, called the symmetry generator, such that $g\q_i = \q_{i+1(\mathrm{mod} n)}$ for all $i$.
% or equivalently,
%$\q = \left(\q_1,g \q_1, g^2 \q_1, \dots, g^{n-1}\q_1\right)$
\end{definition}

In our application $G=SO(4)$, and we have seen that every element of $SO(4)$ is conjugate to an element of 
the double planar rotation group $SO(2)_{xy}\times SO(2)_{zw}$. 
Thus it suffices to consider $G=SO(2)_{xy}\times SO(2)_{zw}$. The finite order elements of this group are those of the form
$R = \left(R_\frac{2\pi a_1}{b_1}, R_\frac{2\pi a_2}{b_2}\right)$, for positive integers $a_1,b_1,a_2,b_2$,
where $R_\theta$ is counterclockwise rotation by angle $\theta$.
Without loss of generality we assume 
that $a_j$ and $b_j$ are relatively prime, for $j=1,2$, so that the order of $R_\frac{2\pi a_j}{b_j}$ is $b_j$.
It follows that the order of $R$ is $lcm(b_1,b_2)$, the least common multiple of $b_1$ and $b_2$.
Note that if $b_1=b_2=n$, then the projections of the polygon onto each principal plane are also regular $n$-gons.
Otherwise the projections are not injective, so the distinctness requirement in the above definition is not satisfied. 
However in general, $\left(\q_{11}, \dots, \q_{bj}\right)$ is a regular $b_j$-gon.
The projection of the original $n$-gon onto the $xy$-plane covers the $b_j$-gon $n/b_j$ times.

Since $a_j$ and $b_j$ are relatively prime, the group generated by $R_\frac{2\pi a_j}{b_j}$ is also generated by $R_\frac{2\pi}{b_j}$.
A $b_j$-gon in $\R^2$ with symmetry generator $R_\frac{2\pi}{b_j}$ is \textit{convex}, meaning that the vertices are numbered in order around the polygon,
and so joining the points in order gives a curve that bounds a convex set.
A $b_j$-gon with symmetry generator $R_\frac{2\pi a_j}{b_j}$, for $a_j\in {2,\dots,b_j-2}$ is non-convex, i.e. ``star-shaped'', e.g. a pentagram.
In this case it is always possible to relabel the points of the $b_j$-gon so that it has symmetry generator   $R_\frac{2\pi}{b_j}$ and is convex.
Given an $n$-gon in $\R^4$ that projects to $n$-gons in both principal planes,
by relabelling points if needed, it is possible to make either one of the two $n$-gons convex, but not necessarily both at once.
We call an $n$-gon \textit{synchronised} if it is possible to make both projections convex at once.

\begin{proposition}\label{prop:reg} Let $n\ge 2$.
There are three types of regular $n$-gon in $\R^4$ with respect to $SO(2)_{xy}\times SO(2)_{zw}$ (and hence w.r.t. 
$SO(4)$ as well):
\begin{enumerate}
\item Planar, lying in a 2-dimensional plane in $\R^4$. Either the $n$-gon lies in a principal plane, or it projects to similar $n$-gons in each of the principal planes.
These two $n$-gons are \textit{synchronised}, meaning that there exists a labelling of the points such that each of the projected $n$-gons is convex.
\item Nonplanar, i.e. skew. The projections of the $n$-gon onto the two principal planes are not similar.
\begin{enumerate}
\item Type I (\textit{unsynchronised} $n$-gons): Both projections are $n$-gons, but there does not exist
a labelling of the points that makes both projections convex simultaneously. 
\item Type II (lower order projections): The projection of the $n$-gon onto at least one principal plane is a regular 
$b_j$-gon for some $b_j<n$.
In this case, %the projection onto the other plane is a regular $q$-gon, for some $q\le n$ with 
$n = lcm(b_1,b_2)$.
\end{enumerate}
\end{enumerate}
\end{proposition}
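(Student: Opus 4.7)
The plan is to extract the abstract symmetry data for the $n$-gon, describe the two planar projections, and then appeal to the classification of invariant $2$-planes of an element of $so(4)$ (Proposition~\ref{principalom}) to separate the planar from the nonplanar cases. By Definition~\ref{def:regular} there is a symmetry generator $g\in SO(2)_{xy}\times SO(2)_{zw}$ of order $n$ with $g\q_i=\q_{i+1\,(\mathrm{mod}\,n)}$. Writing $g=(R_{2\pi a_1/b_1},R_{2\pi a_2/b_2})$ with $\gcd(a_j,b_j)=1$, each factor has order $b_j$, so $g$ has order $\mathrm{lcm}(b_1,b_2)=n$. The projection of the polygon onto $Oxy$ is the orbit of $\q_0^{xy}$ under a rotation of order $b_1$, hence a regular $b_1$-gon covered $n/b_1$ times, unless $\q_0^{xy}=0$, in which case the polygon lies entirely in $Ozw$; symmetrically for the $zw$ projection. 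The reduction from $SO(4)$ to $SO(2)_{xy}\times SO(2)_{zw}$ is automatic by Proposition~\ref{principalom}(i), since any finite cyclic subgroup of $SO(4)$ is orthogonally conjugate into the double planar rotation group.

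For the planar/nonplanar dichotomy, I would argue that if the polygon lies in a $2$-plane $P$ and $n\ge 3$ with $b_1,b_2\ge 2$, then the vertices affinely span $P$ (they cannot be collinear, since $n\ge 3$ forces at least one $b_j\ge 3$ and hence a non-collinear projection). Thus $g$ must leave $P$ invariant; moreover the only $g$-fixed point of $\mathbb{R}^4$ is the origin, so the centroid of the polygon sits at $0$ and $P$ is a linear $2$-plane. Proposition~\ref{principalom}(ii) then implies that the $g$-invariant $2$-planes are exactly $Oxy$ and $Ozw$, except when $\theta_1=\pm\theta_2\pmod{2\pi}$, in which case there is an infinite family. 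Hence the polygon is planar iff it lies in a principal plane or $\theta_1=\pm\theta_2$; in the latter case both projections are regular $n$-gons (forcing $b_1=b_2=n$) related by a common or opposite planar rotation, and so are similar. The edge cases $b_j=1$ produce planar polygons in affine translates of a principal plane and fall outside the stated trichotomy; they would be excluded or treated separately.

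The synchronisation statement is an arithmetic rephrasing. Relabelling by $\q_i\mapsto \q_{ki\,(\mathrm{mod}\,n)}$ with $\gcd(k,n)=1$ replaces $g$ by $g^k$, changing the rotation numbers on the two projections to $(ka_1,ka_2)\pmod{n}$. A full-order projection ($b_j=n$) is convex iff its rotation number is $\pm 1\pmod{n}$; both can be made convex simultaneously iff there exists $k$ with $ka_1\equiv\pm 1$ and $ka_2\equiv\pm 1\pmod{n}$, which holds iff $a_1\equiv\pm a_2\pmod{n}$, exactly the condition $\theta_1=\pm\theta_2$. This settles item~(1). For the nonplanar items, $b_1=b_2=n$ with $a_1\not\equiv\pm a_2\pmod{n}$ is Type~I, and $\min(b_1,b_2)<n$ is Type~II (with $n=\mathrm{lcm}(b_1,b_2)$ from the order computation). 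Type~II is automatically nonplanar: if $\theta_1=\pm\theta_2$, uniqueness of reduced fractions forces $b_1=b_2$, so $\mathrm{lcm}(b_1,b_2)=b_1=b_2=n$, contradicting $\min(b_1,b_2)<n$.

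The main obstacle I expect is the equivalence between planarity and $\theta_1=\pm\theta_2$ beyond the principal-plane sub-cases. The forward direction rests on Proposition~\ref{principalom}(ii) to exclude exotic invariant $2$-planes; the reverse direction is cleanest via the complex identification $\mathbb{R}^4\cong\mathbb{C}^2$, under which $\theta_1=\theta_2$ places the polygon in a complex line $\{(z,\lambda z):z\in\mathbb{C}\}$ and $\theta_1=-\theta_2$ places it in an antiholomorphic real $2$-plane $\{(z,\mu\bar z):z\in\mathbb{C}\}$, each visibly $g$-invariant.
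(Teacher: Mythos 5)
Your proposal is correct in its essentials but takes a genuinely different route from the paper's. The paper's proof never touches the rotation angles of the symmetry generator: it argues that a configuration is planar iff every $\q_i$ is a fixed linear combination of $\q_1$ and $\q_2$, that this happens iff the two projected polygons are similar (hence congruent after a relabelling), and that this is exactly the synchronisation condition; Type II is then nonplanar simply because a $b$-gon with $b<n$ cannot be congruent to an $n$-gon. You instead extract the rotation numbers $(a_1/b_1,a_2/b_2)$ of the generator $g$, characterise planarity by the existence of a $g$-invariant linear $2$-plane containing the vertices, and translate synchronisation into the congruence $a_1\equiv\pm a_2\pmod n$. Your route buys an explicit arithmetic criterion for each of the three types and makes the $\C^2$ picture of planar non-principal polygons transparent, at the cost of needing the full classification of invariant $2$-planes of $g$. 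Two caveats. First, the classification you invoke is correct (non-principal invariant $2$-planes exist iff the two diagonal blocks of $g$ share an eigenvalue, i.e. $\theta_1\equiv\pm\theta_2\pmod{2\pi}$), but it is not what Proposition~\ref{principalR}(ii) or Proposition~\ref{principalom}(ii) literally state: as written they list only $R=\mathrm{Id}$ and $\theta_1=\pm\theta_2=\pm\tfrac{\pi}{2}$ (respectively $\omega_1=\omega_2$) as the exceptional cases, omitting, e.g., $\theta_1=\theta_2=2\pi/5$, which is precisely the case your pentagon argument needs; you should therefore supply the short complexified-eigenspace argument yourself rather than cite those propositions. Second, your deferral of the $b_j=1$ edge cases is unnecessary under the paper's standing convention that configurations are centred at the origin: if $b_1=1$ then all $xy$-projections coincide with the $xy$-component of the centroid, hence vanish, so the polygon lies in the principal plane $Ozw$ and falls under case (1) of the trichotomy.
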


\begin{proof} Let $\q = \left(\q_1,\dots,\q_n\right)$ be a regular $n$-gon in $\R^4$.
If the $n$-gon is planar, then for every $i$, $\q_{i}$ is a linear combination of $\q_{1}$ and $\q_{2}$, 
which implies that 
$\q_{ij}$ is the \textit{same} linear combination of $\q_{1j}$ and $\q_{2j}$ for $j=1,2$, so the two projected polygons are similar to each other,
meaning that there exists an invertible linear transformation taking one to the other;
since both projections are regular polygons, this just means that the order of the points on the polygon is the same in both projections.
If the two projections of the $n$-gon are similar $b$-gons, with $b_1=b_2=b$, then $b = n$, since we must have 
$n = lcm(b, b)$.
There exists a (re)labelling of the original points $\left(\q_1,\dots,\q_n\right)$ such that each of the projected $n$-gons is convex,
since this is always true of a single $n$-gon, and the two $n$-gons are congruent.
Conversely, suppose that both projections of the $n$-gons are also $n$-gons, i.e. for each $j=1,2$ the projected points $\left(\q_{1j},\dots,\q_{nj}\right)$ are distinct.
Suppose there exists a (re)labelling of the original points $\left(\q_1,\dots,\q_n\right)$ such that both of the projected $n$-gons are convex.
Then they are congruent.
%Then $\left(R_\frac{2\pi}{n}, R_\frac{2\pi}{n}\right)$ is a symmetry generator. %, and the two projected $n$-gons are congruent.
For each $j=1,2$, each projected point $\q_{ij}$ is a linear combination of $\q_{1j}$ and $\q_{2j}$.
Since the two projected $n$-gons are congruent, the coefficients in this combination are 
independent of $j$, so each $\q_{i}$ is a linear combination of $\q_{1}$ and $\q_{2}$.
This shows that the $n$-gon is planar.

The above argument also shows that if the two projections are unsynchronised $n$-gons, then the original $n$-gon $\q$ is nonplanar.
Finally, we consider the case (Type II) in which at least one of the projections of $\q$ is a $b$-gon for some $b<n$.
In this case it is not possible for the two projections to be congruent.
\end{proof}

\begin{remark}
Regular planar polygons may have any number of vertices.
The smallest nonplanar polygon of type I has $n=5$ vertices, with one projection convex and the other a pentagram.
The smallest nonplanar polygon of type II has $n=4$ vertices, with one projection a square and the other a digon (i.e. having $2$ vertices). 
The next smallest nonplanar polygon of type II has $n=6$ vertices, with one projection a triangle and the other a digon. 
The three examples are the only nonplanar polygons with fewer than $7$ vertices.
\end{remark}

\medskip

We now briefly recall the theory of Discrete Reduction as presented in \cite{Ma92} (but also see \cite{GS84} pp.203). 
Let $\Sigma$ be a finite  group acting on a cotangent bundle $T^*Q$, and consider its fixed point set:
\begin{equation}
\text{Fix}\,(\Sigma, T^*Q):= \{(q,p) \in T^*Q\,|\, g(q,p)=(q,p)\,\,\,\forall\, \sigma \in \Sigma\}\,.
 \end{equation}
If a Hamiltonian $H: T^*Q \to \mathbb{R}$  is $\Sigma$-invariant and the symplectic structure is preserved under the $\Sigma$-action, then $\text{Fix}\,(\Sigma, T^*Q)$ is a symplectic submanifold of $T^*Q$ and it is an invariant  manifold for the dynamics of $H$.
Moreover, if the symplectic structure and $H$ are also invariant under the action of a Lie group $G$ giving rise to an equivariant momentum map, and  
the actions of $G$ and $\Sigma$ commute (or satisfy a more general compatibility condition)
%acts on $G$ in such a way that the actions  of $G$ an $\Sigma$ are compatible 
%{\color{red} (does this just mean that the two actions commute?)}
then $\displaystyle{H\big|_{\text{Fix}\,(\Sigma, T^*Q)} \to \mathbb{R}}$
% is a co-tangent bundle Hamiltonian system with 
is a $G$ symmetric Hamiltonian system.
%Moreover, 
By  \textit{Palais' Principle of Criticality} \cite{Pal79} any equilibrium or RE in $\text{Fix}\,(\Sigma, T^*Q)$ is also an equilibrium or RE, respectively, in the full $T^*Q$ phase space,
where the $RE$ are with respect to the same commuting group $G$ mentioned above, if such a group exists.

\smallskip
Returning to the dynamics of $n$ mass points  in $\mathbb{R}^4$, let us assume all masses to be unity, i.e.  $m_1=m_2=\ldots=m_n=1$.
We consider regular $n$-gon configurations in $\R^4$, centred at the origin. 
As discussed above, these configurations have associated symmetry generators $R\in SO(4)$,
$R = \left(R_\frac{2\pi a_1}{b_1}, R_\frac{2\pi a_2}{b_2}\right)$, for positive integers $a_1,b_1,a_2,b_2$,
with $a_j$ and $b_j$ relatively prime, for $j=1,2$,
and $n = lcm(b_1,b_2)$.
Once the symmetry generator $R$ is specified, the set of all $n$-gon configurations with this symmetry
is:
\begin{align}\label{E:ngonR}
\{ \left(\q_1, \dots, \q_n\right) \, : \, R\q_i = \q_{i+1 (\mathrm{mod} n)} \textrm{ for all } i\}.
\end{align}
We can express this set in terms of a group action as follows.
For every $k \in \{0,\dots, n-1\}$, let $\sigma_k$ be the cyclic permutation of $\{1, \dots, n\}$
given by $\sigma_k(i) = (i-k) (\textrm{mod} \,n)$, i.e. a ``backwards shift'' by $k$ positions. Let 
$C_n = \mathbb{Z}_n = \{0, \dots, n-1\}$ be the finite cyclic group of order n, and 
let $C_n$ act on $ Q = \left(\R^4\right)^n \simeq \left(\R^2\times \R^2\right)^n$
 as follows:
\begin{align*}
k \cdot \left(\q_1, \dots, \q_n\right) 
&= \left({\cal R}^k \q_{\sigma_{k}(1)}, \dots,
{\cal R}^k \q_{\sigma_{k}(n)}\right) .
\end{align*}
By abuse of notation, we denote this action by $C_n^{\cal R}$. 
Note that in particular, $1 \cdot \left(\q_1, \dots, \q_n\right) 
= \left({\cal R} \q_{\sigma_{1}(1)}, \dots,
{\cal R} \q_{\sigma_{1}(n)}\right)$, and this generates the whole action.
So
\begin{align*}
\mathrm{Fix}\,\left(C_n^{\cal R}, Q\right) 
&= 
\{ \left(\q_1, \dots, \q_n\right) \, : \, \left(\q_1, \dots, \q_n\right) = \left({\cal R} \q_{\sigma_{1}(1)}, \dots,
{\cal R} \q_{\sigma_{1}(n)}\right)
\} \\
&= \{ \left(\q_1, \dots, \q_n\right) \, : \, R\q_i = \q_{\sigma_1^{-1}(i)} \textrm{ for all } i\} \\
&= \{ \left(\q_1, \dots, \q_n\right) \, : \, R\q_i = \q_{i+1 (\mathrm{mod} n)} \textrm{ for all } i\},
\end{align*}
which is the same expression as above in \eqref{E:ngonR}.
Thus
$\text{Fix}\,\left(C_n^{\cal R}, Q\right)$ consists of all
regular $n$-gon configurations with symmetry generator $R$.
Let $\hat Q^R=\text{Fix}\,\left(C_n^R, Q\right)$. 
In polar coordinates, the action $C_n^R$ is expressed as:
%On the configuration space $Q \subseteq \mathbb{R}^{4n}$, the product  group $(S_n \times C_n) \times (S_n \times C_n)$  acts diagonally on each copy of $\mathbb{R}^4$. 
%
\begin{align*}
k \cdot &\left(r_{11}, \theta_{11}, r_{12}, \theta_{12}, \dots,  r_{n1}, \theta_{n1}, r_{n2}, \theta_{n2}
\right)
= \left(r_{\sigma_{k}(1)1}, \theta_{\sigma_{k}(1)1}+ \frac{2\pi a_1 k}{b_1},
r_{\sigma_{k}(1)2}, \theta_{\sigma_{k}(1)2}+ \frac{2\pi a_2k}{b_2}, \right. \\
&\dots,
\left. r_{\sigma_{k}(n)1}, \theta_{\sigma_{k}(n)1}+ \frac{2\pi a_1 k}{b_1},
r_{\sigma_{l}(n)2}, \theta_{\sigma_{l}(n)2}+ \frac{2\pi a_2k}{b_2}
\right).
\end{align*}

When $R = \left(R_\frac{2\pi}{n}, R_\frac{2\pi}{n}\right)$, which corresponds to planar $n$-gons,
we have:
\begin{align*}
\textrm{(planar case)} \quad \hat Q^R
=\left\{ \q \in Q\,|\,r_{ij} = r_{kj},  \,\, \theta_{i j}-\theta_{\sigma_1(i)j} = \frac{2\pi}{n}, \,\,\textrm{ for all } i,k =1,2,\ldots n, 
\,\,j=1,2
\right\},
\end{align*}
%From Proposition \ref{prop:reg} there are different classes of regular polygons 
i.e. all configurations such that both projections are regular $n$-gons.
Note that there is no restriction on the relationship between these two $n$-gons:
they may have different radii and/or different phases $\theta_{ij}$.
%Clearly, $\hat Q$ is determined by the projections of $\q_1$ in the two principal planes.

In all cases (planar or nonplanar), every element of $\hat Q^R$ is determined by the two common radii $r_1$ and $r_2$, defined by
$r_j := r_{kj}$ for any $k$; and the phases of the first point, $\theta_1 := \theta_{11}$ 
and $\theta_2 := \theta_{12}$. 
This observation defines double-polar coordinates $(r_1,\theta_1, r_2,\theta_2)$ on $\hat Q^R$.
%\[
%\hat Q \simeq \{ (r_1,\theta_1, r_2,\theta_2) \,|\, r_1, r_2 > 0, \theta_1, \theta_2 \in \R\} 
%\]
The action of $C_n^R$ lifts to an action on $T^*Q$, expressed in polar coordinates as
\begin{align*}
&(k, l) \cdot \left(r_{11}, \theta_{11},  p_{r_{11}}, p_{\theta_{11}}, r_{12},\theta_{12}, p_{r_{12}}, p_{\theta_{12}},
\dots, r_{n1}, \theta_{n1},  p_{r_{n1}}, p_{\theta_{n1}}, r_{n2},\theta_{n2}, p_{r_{n2}}, p_{\theta_{n2}} \right) \\
&= (r_{\sigma_{k}(1)1}, \theta_{\sigma_{k}(1)1}+ \frac{2\pi a_1 k}{b_1},p_{r_{\sigma_k(1)1}}, p_{\theta_{\sigma_k(1)1}},
r_{\sigma_{k}(1)2}, \theta_{\sigma_{k}(1)2}+ \frac{2\pi a_2k}{b_2},p_{r_{\sigma_k(1)2}}, p_{\theta_{\sigma_k(1)2}}, \dots,\\
&r_{\sigma_{k}(n)1}, \theta_{\sigma_{k}(n)1}+ \frac{2\pi a_1k}{b_1},p_{r_{\sigma_k(n)1}}, p_{\theta_{\sigma_k(n)1}},
r_{\sigma_{k}(n)2}, \theta_{\sigma_{k}(n)2}+ \frac{2\pi a_2}{b_2},p_{r_{\sigma_k(n)2}}, p_{\theta_{\sigma_k(n)2}}, ).
\end{align*}
The fixed point set in $T^*Q$ is 
\begin{align*}
%\textrm{Fix}(C^n\times C^n, T^* Q)
%&\left\{ (\q, \p) \in T^*Q \,|\,r_{ij} = r_{kj}, \theta_{ij}- \theta_{\sigma_1(i)j} = \frac{2\pi}{n},  p_{ij}= p_{kj}\,, \,\, \textrm{for all } i,k=1,2,\ldots n, 
%\,\,j=1,2 \right\}
%&=
\left\{ (\q, \p) \in T^*Q \,|\, \q \in \hat Q \textrm{ and }  p_{r_{ij}}= p_{r_{kj}}  
\textrm{ and } p_{\theta_{ij}}= p_{\theta_{kj}}\,, \,\, \textrm{for all } i,k=1,2,\ldots n, 
\,\,j=1,2 \right\},
\end{align*}
i.e. the momenta of all points are required to be equal when expressed in polar coordinates.
Note that this equals $T^*\hat Q$, which can be checked directly, and also follows from a general result \cite{Ma92}.
By  the Discrete Reduction method,  $T^*\hat Q$  is a symplectic invariant manifold and the dynamics on it is given by the restriction $\hat H$   of the Hamiltonian \eqref{Ham_gen} to   $T^*\hat Q$. Also, any RE of $\hat H$ is a RE of the full system.

Any \textit{planar} configuration in $\hat Q$ projects to two regular $n$-gons with side lengths
$2 r_j\,  \sin \frac{\pi}{n}$,
where $r_j$ is the common radius of the $j^{th}$ $n$-gon, for $j=1,2$. So the original $n$-gon has side length
\begin{align}
l =  2\sqrt{r_1^2 + r_2^2}\,  \sin \frac{\pi}{n} ,
\label{side}
\end{align}
The distance between any two points on the $n$-gon, with angle $\frac{k\pi}{n}$
between them, is 
\begin{align}
\textrm{(planar case)} \quad
2\sqrt{r_1^2 + r_2^2} \left|\sin \frac{k\pi}{n}\right|.
\end{align}
The general formula, for $R = \left(R_\frac{2\pi a_1}{b_1}, R_\frac{2\pi a_2}{b_2}\right)$,
for the distance between $\q$ and $R^k \q$, is
\begin{align}
2 \sqrt{r_1^2 \sin^2 \frac{ka_1\pi}{b_1}+ r_2^2 \sin^2 \frac{ka_2\pi}{b_2}}.
\end{align}

As a consequence, 
we have the following formula for the restricted Hamiltonian,
where for $j=1,2$, the momentum of the $j^{th}$ $n$-gon 
in polar coordinates is $\left(p_{r_j}, p_{\theta_j}\right)$:
\begin{align}
&\hat H(r_1, \theta_1, p_{r_1}, p_{\theta_1}, r_2, \theta_2, p_{r_2}, p_{\theta_2}) \notag \\
&= \frac{n}{2}\left( p_{r_1}^2 +  \frac{p_{\theta_1}^2}{r_1^2}  + p_{r_2}^2 +  \frac{p_{\theta_2}^2}{r_2^2} \right)
+   \sum \limits_{1\leq k<l\leq n}
V \left( 
2\sqrt{
r_1^2 \sin^2 \frac{(l-k)a_1\pi}{b_1} +  r_2^2 \sin^2 \frac{(l-k)a_2 \pi}{b_2}
}
 \right)
 \\
 &= \frac{n}{2}\left( p_{r_1}^2 +  \frac{p_{\theta_1}^2}{r_1^2}  + p_{r_2}^2 +  \frac{p_{\theta_2}^2}{r_2^2} \right)
+   \frac{n}{2} \sum \limits_{1\leq k < n}
V \left( 
2\sqrt{
r_1^2 \sin^2 \frac{ka_1\pi}{b_1} +  r_2^2 \sin^2 \frac{ka_2 \pi}{b_2}}
% r_1^2 + r_2^2} \, \left| \sin \frac{k \pi}{n}\right|
 \right).
\label{H_discrete}
\end{align}
In the \textrm{planar} case, this simplifies to:
\begin{align}
&\hat H(r_1, \theta_1, p_{r_1}, p_{\theta_1}, r_2, \theta_2, p_{r_2}, p_{\theta_2}) \notag \\
&= \frac{n}{2}\left( p_{r_1}^2 +  \frac{p_{\theta_1}^2}{r_1^2}  + p_{r_2}^2 +  \frac{p_{\theta_2}^2}{r_2^2} \right)
+   \frac{n}{2} \sum \limits_{1\leq k < n}
V \left( 
2\sqrt{
 r_1^2 + r_2^2}
  \left|\sin \frac{k\pi}{n}\right| \, 
 \right).
\label{H_discreteplanar}
\end{align}

\begin{remark}
If  the angular velocity  is zero, the dynamics on $T^*\hat Q$ consist of homothetic solutions.
\end{remark}
%
%
% Note that  we have n distinct -homographic invariant manifolds, one for each $m$ fixed. 

%
%
It is immediate that due to the $SO(2)_{xy}\times SO(2)_{zw}$-invariance of $\hat H,$ the angular momentum is conserved and so $\left(p_{\theta_1}(t), p_{\theta_2}(t)\right)=(c_1, c_2)$ for all $t.$  Note that $n c_j=\mu_j,$ $j=1,2,$ where $\muu=(\mu_1,\mu_2)$ is the total angular momentum of the (equal mass) $n$-body problem. The dynamics on $\hat H$ is reducible to a two degrees of freedom system and the reduced Hamiltonian is 
\begin{align}
\hat H_{c_1, c_2}(r_1,  p_{r_1}, r_2, p_{r_2}) &= 
\frac{n}{2}\left( p_{r_1}^2 +  \frac{c_1^2}{r_1^2}  + p_{r_2}^2 +  \frac{c_2^2}{r_2^2} \right)
+ \frac{n}{2} \sum \limits_{1\leq k < n}
V \left( 
2\sqrt{r_1^2 \sin^2 \frac{ka_1\pi}{b_1} +  r_2^2 \sin^2 \frac{ka_2 \pi}{b_2}}
% r_1^2 + r_2^2} \, \left| \sin \frac{k \pi}{n}\right|
 \right).
\label{H_red}
\end{align}
%
%
%(In the planar case, the second term
%reduces to $V \left( 
%2\sqrt{
% r_1^2 + r_2^2} \left|\sin \frac{k \pi}{n}\right|
%\right).$

If $V$ is attractive (and so $V'(r)>0$ for all $r>0$) then regular $n$-gon RE 
configurations arise as solutions of a system of two equations in $r_1, r_2$ of the following form, 
for $j=1$ and $j=2$:
\begin{align}
&\frac{c_j^2}{r_j^4} =  \sum \limits_{1\leq k< n}\frac{
V' \left( 
2\sqrt{r_1^2 \sin^2 \frac{ka_1\pi}{b_1} +  r_2^2 \sin^2 \frac{ka_2 \pi}{b_2}}
 \right)
 }
 {
 \sqrt{r_1^2 \sin^2 \frac{ka_1\pi}{b_1} +  r_2^2 \sin^2 \frac{ka_2 \pi}{b_2}}
 }  \sin^2 \frac{k a_j\pi}{b_j}.
 \label{RE_H_homo}
 \end{align}
In the \textit{planar} case, the condition is:
\begin{align}
&\frac{c_j^2}{r_j^4} =  \sum \limits_{1\leq k< n}\frac{
V' \left( 
2\sqrt{r_1^2 \sin^2 \frac{k\pi}{n} +  r_2^2 \sin^2 \frac{k \pi}{n}}
 \right)
 }
 {
 \sqrt{r_1^2 \sin^2 \frac{k\pi}{n} +  r_2^2 \sin^2 \frac{k \pi}{n}}
 }  \sin^2 \frac{k \pi}{n}.
 \label{RE_H_homo_planar}
 \end{align}
In any case (planar or non-), we observe that for at least some $(c_1, c_2)$ with $c_1\neq 0$ and $c_2 \neq 0$, the systems above admits solutions $r_{10} =r_{10}(c_1, c_2),$ $r_{20}= r_{20}(c_1, c_2)$ 
and these satisfy
\begin{align}
r_{20}= \gamma\, r_{10} \,\,\,\,\,\,\text{where}\,\,\, \,\gamma := \sqrt{ \left| \frac{c_2}{c_1} \right|}\,\,. 
\label{homog_RE}
\end{align}
The dynamics on the invariant manifolds of synchronised homographic motions coincide to the central force problem on $\mathbb{R}^4$ as studied in Section \ref{sect: central}, with the reduced Hamiltonian  given in \eqref{H_red}.

\section{The three-body problem}\label{sect:3-body}

\subsection{Reduction}

In this section we reduce the 12 degrees of freedom   the three body problem in $\mathbb{R}^4$ to a 6 degrees of freedom system. This is possible due to translational and rotational $SO(2)_{xy}\times SO(2)_{zw}$-symmetries of the dynamics.
 %the linear and angular momenta conservation, where for the latter we consider the $SO(2)_{xy}\times SO(2)_{zw}$ action. 
 Rather than applying the general symplectic reduction theory, we chose to deduce the reduced space and dynamics working specifically on our system.

\smallskip
Consider the three body problem in $\mathbb{R}^4$ with the  configuration given by $(\q_1, \q_2, \q_3) \in \left(\mathbb{R}^{4} \right)^3 \setminus \{\text{possible collisions}\}.$ We start by introducing  the Jacobi coordinates
\begin{equation}
\u:= \q_2-\q_1, \,\,\,\, \v:=\q_3 - \frac{m_1 \q_1+m_2\q_2}{m_1+m_2}
\end{equation}
that we write in double-polar coordinates 
\begin{equation}
\u:= (R_1, \Theta_{R1}, R_2, \Theta_{R2}),\, \,\,\,\, \v:= (S_1, \Theta_{S1}, S_2, \Theta_{S2}).
\end{equation}
The Hamiltonian \eqref{Ham_gen} reads:
\begin{align}
H&=\frac{1}{2M_1}\left(P_{R_1}^2 +\frac{P_{\Theta_{R1}}^2}{R_1^2} +  P_{R_2}^2 +\frac{P_{\Theta_{R2}}^2}{R_2^2}  \right)  +  
\frac{1}{2M_2}\left(P_{S_1}^2 +\frac{P_{\Theta_{S1}}^2}{S_1^2} +  P_{S_2}^2 +\frac{P_{\Theta_{S2}}^2}{S_2^2}  \right)  \nonumber \\
&\hspace{8cm} - V(d_{12}) -  V(d_{13}) -V(d_{23}) 
\label{Ham_5}
\end{align}
where $\displaystyle{M_1:={m_1m_2}/{(m_1+m_2)}}$ and $\displaystyle{M_2=m_3(m_1+m_2)/(m_1+m_2+m_3)}$, and 
\begin{align*}
&d_{12}= \sqrt{R_1^2+R_2^2}   \\
 &d_{13}= \sqrt{S_1^2+\alpha_1^2 R_1^2 - 2 \alpha_1 R_1 S_1 \cos (\Theta_{S1}-\Theta_{R1})   +S_2^2+\alpha_1^2R_2^2 - 2 \alpha_1 R_2 S_2 \cos (\Theta_{S2}-\Theta_{R2})}  \\
  &d_{23}= \sqrt{S_1^2+\alpha_2^2 R_1^2 +2 \alpha_2 R_1 S_1 \cos (\Theta_{S1}-\Theta_{R1})   +S_2^2+\alpha_2^2R_2^2 + 2 \alpha_2 R_2 S_2 \cos (\Theta_{S2}-\Theta_{R2}) }
\end{align*}
with $\alpha_1=m_2/(m_1+m_2)$ and $\alpha_2=m_1/(m_1+m_2)$. The symmetry becomes obvious after  performing the (symplectic) change of variables 
\begin{align}
&(\Theta_{Rj}, \Theta_{Sj}) \to (\Phi_j, \Psi_j) := (\Theta_{Sj} - \Theta_{Rj}, \Theta_{Sj}), \\
&(P_{\Theta_{Rj}}, P_{\Theta_{Sj}}) \to (P_{\Phi_j}, P_{\Psi_j}) := (P_{\Theta_{Rj}},  P_{\Theta_{Rj}}+ P_{\Theta_{Sj}}),\,\, \,\,\,j=1,2.
\end{align}
The Hamiltonian \eqref{Ham_5} becomes:
\begin{align}
H&=\frac{1}{2M_1}\left(P_{R_1}^2 +\frac{P_{\Phi_{1}}^2}{R_1^2} +  P_{R_2}^2 +\frac{P_{\Phi_{2}}^2}{R_2^2}  \right)  +  
\frac{1}{2M_2}\left(P_{S_1}^2 +\frac{  (P_{\Psi_{1}} - P_{\Phi_{1}}  )^2}{S_1^2} +  P_{S_2}^2 +\frac{ (P_{\Psi_{2}} - P_{\Phi_{2}}  )^2}{S_2^2}  \right)  \nonumber \\
&\hspace{7cm} - V(d_{12}) -  V(d_{13}) -V(d_{23}) 
\end{align}
with
\begin{align}
&d_{12}=\sqrt{R_1^2+R_2^2} \\
 &d_{13}= \sqrt{S_1^2+\alpha_1^2 R_1^2 - 2 \alpha_1 R_1 S_1 \cos \Phi_1   +S_2^2+\alpha_1^2R_2^2 - 2 \alpha_1 R_2 S_2 \cos \Phi_2}  \\
  &d_{23}= \sqrt{S_1^2+\alpha_2^2 R_1^2 +2 \alpha_2 R_1 S_1 \cos \Phi_1   +S_2^2+\alpha_2^2R_2^2 + 2 \alpha_2 R_2 S_2 \cos \Phi_2 }\,.
\end{align}
It is immediate that along any solution the angular momentum  is conserved and so:
\begin{equation}
\left(P_{\Psi_{1}}(t),P_{\Psi_{2}}(t) \right)= const.= \left(\mu_1, \mu_2\right)\,.
\end{equation}
Thus, denoting the reduced configurations space $M:= [0, \infty) \times (0, \infty)  \times (0, \infty)  \times (0, \infty)   \times S^1 \times S^1 \setminus \{\text{possible collisions}\}$, we obtain  the reduced Hamiltonian:
\begin{align}
H_{\text{equil}}&: T^*M \to \mathbb{R} \nonumber\\
H_{\text{equil}}&=\frac{1}{2M_1}\left(P_{R_1}^2 +\frac{P_{\Phi_1}^2}{R_1^2} +  P_{R_2}^2 +\frac{P_{\Phi_2}^2}{R_2^2}  \right)  +  
\frac{1}{2M_2}\left(P_{S_1}^2 +\frac{  (\mu_1-P_{\Phi_{1}} )^2}{S_1^2} +  P_{S_2}^2 +\frac{ (\mu_2-P_{\Phi_{2}})^2}{S_2^2}  \right)  \nonumber \\
&\hspace{6cm} - V(d_{12}) -  V(d_{13}) -V(d_{23}). 
\label{red_3b}
\end{align}
The equations of motion are
\begin{align}
&\dot R_j = \frac{P_{R_j}}{M_1} 
\label{sys_1}
\\
&\dot P_{R_j} =  \frac{P_{\Phi_j}^2}{M_1R_j^3} - \frac{V'(d_{12})}{d_{12}}R_j -  \frac{V'(d_{13})}{d_{13}}\left(\alpha_1^2 R_j- \alpha_1  S_j \cos \Phi_j \right) -   \frac{V'(d_{23})}{d_{13}}\left(\alpha_2^2 R_j+ \alpha_2  S_j \cos \Phi_j \right)\\
& \dot S_j= \frac{P_{S_j}}{M_2}\\
& \dot P_{S_j} =  \frac{(\mu_j -P_{\Phi_{j}})^2}{M_2S_j^3} - \frac{V'(d_{13})}{d_{13}}\left(S_j - \alpha_1  R_j \cos \Phi_j \right) -   \frac{V'(d_{23})}{d_{13}}\left(S_j+ \alpha_2  R_j \cos \Phi_j \right)\\
&\dot \Phi_j = \frac{P_{\Phi_j} }{M_1 R_j^2} -  \frac{\mu_j -P_{\Phi_{j}}}{M_2 S_j^2}\\
& \dot P_{\Phi_j}= -  \frac{V'(d_{13})}{d_{13}}\left(\alpha_1 R_j S_j \sin \Phi_j \right)
+  \frac{V'(d_{23})}{d_{23}}\left(\alpha_2 R_j S_j \sin \Phi_j \right).
\label{sys_6}
\end{align}

\subsection{Relative equilibria}

Recall that  in a Lie-symmetric   mechanical system,  RE solutions project to equilibria in a 
reduced space, and vice-versa,
any equilibrium  in a reduced space lifts to a RE solution in the unreduced space (for details, see \cite{Ma92}).

In the case of the three body problem  previously presented, for every $\muu=\left(\mu_1, \mu_2\right) \setminus (0,0)$,  the RE  (of momentum $\muu$) are found as the equilibria of the system \eqref{sys_1}-\eqref{sys_6}. We observe that  equating to zero the RHS of the \eqref{sys_6} we obtain
\begin{align}
R_j S_j \sin \Phi_j  \left( \alpha_1  \frac{V'(d_{13})}{d_{13}} - \alpha_2  \frac{V'(d_{23})}{d_{23}}\right)=0, \,\,\,j=1,2
\end{align}
and so
\begin{itemize}
\item either $\sin \Phi_j =0$ i.e. the projections of $\u$ and $\v$ are parallel on the principal planes, i.e.
\[
\text{Proj}_{xy}( \u )\| \text{Proj}_{xy} (\v)\,\,\,\text{and}\,\,\,\text{Proj}_{zw}( \u )\| \text{Proj}_{zw} (\v)
\]
These RE correspond to the collinear Euler configurations in the classical thee body problem;

\item or 
\begin{equation}
m_2  \frac{V'(d_{13})}{d_{13}} = m_1  \frac{V'(d_{23})}{d_{23}},
\end{equation}
where we use that $\alpha_1=m_2/(m_1+m_2)$ and $\alpha_2=m_1/(m_1+m_2).$
\end{itemize}
\begin{remark}
If we chose $V$ to be the Newtonian potential, i.e. for any two mass points $m_j, m_k$ we have $V(d_{ij}) =  -m_j m_k/d_{jk}$ then the condition above leads to $d_{13}=d_{23}$, and so any non-collinear three body  RE triangle is isosceles. This was also proved (with a different method)  in the general case of the $n$-body problem in $\mathbb{R}^n$, n even, by  Albouy and Chenciner \cite{AC98}.

\end{remark}

\subsection{Stability of equilateral triangles}

In this Section we study the stability of the equilateral triangle solutions in the case of three equal masses interacting via a generic  attractive potential.
%, exception being the Jacobi $v(r)=-1/r^2$ potential. 
%Note that for $n=3$ the synchronised and non-synchronised homographic motions coincide.
%From Section \ref{sect:DR},  the side of the equilateral triangles formed by the three bodies is 
%
%\begin{align}
%l=r_0\sqrt{3\left(1+ \gamma^2 \right)}\,.
%\end{align}
%
%where $r_0$ is the 

%Consider now that $n=3$ and  $m_1=m_2=m_3=1.$
 %Assuming that $V$ is attractive, 
 From Section \ref{sect:DR}, equations 
\eqref{RE_H_homo_planar} and \eqref{homog_RE}, given $(c_1, c_2) \in  \in \mathbb{R}^2\setminus \{(0,0)\}$
 the RE configuration polar radii receive the form
%%%%%%%%%%%%%%%%%%%%%%%%%%%%%%%%%%%%%%%%%%%
%
%$\displaystyle{
%r_2 = r_1 \sqrt{\left| \frac{c_2}{c_1} \right|}}$.  Denoting $\displaystyle{\gamma:= \sqrt{\left| \frac{c_2}{c_1}\right| } = \sqrt{\left| \frac{\mu_2}{\mu_1}\right| } }$  (where we used  that $3c_j=\mu_j,$ $j=1,2$) we may write
%
\begin{align}
r_1=: r_0, \,\,\,\,\, r_2 = \gamma r_0.
\end{align}
where $r_0$ solves
\begin{align}
 r^3V'\left(r\sqrt{3\left(1+ \gamma^2 \right)} \right)=\frac{2\sqrt{3\left(1+\gamma^2\right)}}{9} \,c_1^2
\label{r_0}
\end{align}
where
\[
 \gamma :=  
 \sqrt{\left|  \frac{\mu_2}{\mu_1}\right| }>0.
\]
Note that  $\mu_j=3c_j,$ $j=1,2$.
% =\sqrt{\left| \frac{c_2}{c_1}\right|}  >0.

\medskip
To ease notation,   we write  $\mu:=\mu_1 =  3c_1$. We assume that we are in a generic situation in which for any $(\mu, \gamma)$ in some non-void domain, the equation \eqref{r_0} has at least one 
solution $r_0 = r_0(\mu, \gamma)$. 
The RE with the polar radii above project into the equilibria  
\begin{align}
&(R_1, R_2, S_1, S_2, \Phi_1, \Phi_2) = \left( r_0\sqrt{3},\, \gamma r_0\sqrt{3},\, \frac{3r_0}{2},\, \frac{3\,\gamma\, r_0}{2}, \,-\frac{\pi}{2}, -\frac{\pi}{2} 
\right)
\label{equi_1}
\\
&(P_{R_1}, P_{R_2}, P_{S_1}, P_{S_2}, P_{\Phi_1}, P_{\Phi_2}) =  \left(0, 0, 0, 0, \frac{\mu}{2}, \frac{\gamma^2\mu}{2} 
\right).
\label{equi_2}
\end{align}
of the system \eqref{sys_1} - \eqref{sys_6}. We also note that the side of the equilateral triangles formed by the three bodies is 
\begin{align}
l=r_0\sqrt{3\left(1+ \gamma^2 \right)}\,.
\end{align}

We will show, in Theorem \ref{unstable} below, that for generic potentials and generic values of $\mu$ and $\gamma$, 
the equilateral configuration RE is unstable.

%Applying the general theory for finite dimensional Hamiltonian systems  (see for instance, \cite{Pat92, Ma92}) to our context, a RE solution of  is Lyapunov/linearly/spectrally stable modulo the translational and rotational $SO(2)_{xy}\times SO(2)_{zw}$ symmetries if the corresponding equilibrium solution   of the reduced reduced  system  is Lyapunov/linearly/spectrally stable in the reduced space. 

Recall that an equilibrium $\z_e$ of a  Hamiltonian system with Hamiltonian $H$ is: \textit{Lyapunov (or non-linearly) stable} if the Hessian $D^2H(\z_e)$ is positive definite;  \textit{linearly stable} if the linearization matrix $\mathbb{J}D^2H(\z_e)$, where $\displaystyle{\mathbb{J} = \left[
\begin{array}{cccc}
\,\,\,\,{\mathbb{O}}_6 & \mathbb{I}_6 \\
-\mathbb{I}_6 & {\mathbb{O}}_6
\end{array}
\right]
 }$, is semi-simple (diagonalizable) and has all eigenvalues purely imaginary; and \textit{spectrally stable} if none of its eigenvalues has a positive real part \cite{MR94}. 
Since the eigenvalues of a Hamiltonian system always appear in quadruples of the form $\pm \text{Re} \lambda \pm i\, \text{Im} \lambda$ (or pairs, in the case of real or purely imaginary values) \cite{MHO09}, 
spectral stability is only possible if all eigenvalues are pure imaginary.
Even in this case, the equilibrium is unstable if the semi-simple-nilpotent decomposition of the linearization has a non-trivial nilpotent component \cite{Me07}. 
For symplectic Hamiltonian systems, since the phase space is even dimensional,
if there is a zero eigenvalue, then its algebraic multiplicity must be even
(since all other eigenvalues must come in quadruples or pairs).
 % In particular for Hamiltonian systems, since the eigenvalues appear quadruples of the form $\pm \text{Re} \lambda \pm i\, \text{Im} \lambda$,  if the semi-simple-nilpotent decomposition of  the linearization possesses a non-trivial nilpotent component, then the equilibrium is unstable.

\smallskip
At an equilibrium $\z_e =  (R_1^e, R_2^e,S_1^e, S_2^e, \Phi_1^e, \Phi_1^e, 0, 0,0, 0, P_{\Phi_1}^e, P_{\Phi_2}^e) \in T^*M$ of the reduced system  \eqref{sys_1} - \eqref{sys_6}, the Hamiltonian Hessian $D^2H_{\text{equil}}(\z_e)$ is a $12 \times 12$ matrix arranged into four $6 \times 6$ blocks:
\begin{align}
D^2H_{\text{equil}}(\z_e) = 
\left[
\begin{array}{cccc}
A &B \\
B^t & D
\end{array}
\right]
\end{align}
with:
\begin{align}
A =  
\left[
\begin{array}{ccc}
[a_1] &{\mathbb{O}}_2 & {\mathbb{O}}_2\\
{\mathbb{O}}_2& [a_2] & {\mathbb{O}}_2\\
{\mathbb{O}}_2&\mathbb{O}_2&{\mathbb{O}}_2
\end{array}
\right] +  \left[D^2 V\right]\Big|_{\z=\z_e}\, ,
\label{block_A}
\end{align}
\begin{align}
[a_1] = 
\left[
\begin{array}{cc}
\frac{3P_{\Phi_1}^2 }{M_1 R_1^4}&0\\
0&\frac{3P_{\Phi_2}^2 }{M_1 R_2^4}
\end{array}
\right],\,\,\,
[a_2]=
\left[
\begin{array}{cc}
\frac{\left( \mu_1-  P_{\Phi_1} \right)^2}{M_2 S_1^4}&0\\
0&\frac{\left( \mu_2-  P_{\Phi_2} \right)^2 }{M_2 S_2^4}
\end{array}
\right],
\end{align}
\begin{align}
B =  
\left[
\begin{array}{ccc}
 {\mathbb{O}}_2 &{\mathbb{O}}_2 &[b_1]\\
{\mathbb{O}}_2 &{\mathbb{O}}_2&  [b_2] \\
{\mathbb{O}}_2&\mathbb{O}_2&\mathbb{O}_2
\end{array}
\right],
\end{align}
\begin{align}
[b_1] = 
\left[
\begin{array}{cc}
-\frac{P_{\Phi_1} }{M_1 R_1^3}&0\\
0& - \frac{P_{\Phi_2} }{M_1 R_2^3}
\end{array}
\right],\,\,\,
[b_2]=
\left[
\begin{array}{cc}
\frac{\left( \mu_1-  P_{\Phi_1} \right)}{M_2 S_1^3}&0\\
0&\frac{\left( \mu_2-  P_{\Phi_2} \right)}{M_2 S_2^3}
\end{array}
\right],
\end{align}
\begin{align}
D =  
\left[
\begin{array}{ccc}
[d_1] & {\mathbb{O}}_2 &{\mathbb{O}}_2\\
{\mathbb{O}}_2&  [d_2] &{\mathbb{O}}_2 \\
{\mathbb{O}}_2&\mathbb{O}_2&[d_3]
\end{array}
\right],
\end{align}
\begin{align}
[d_1]=\text{diag}\left(\frac{1}{M_1} \right),  \,\,\,\,[d_2]=\text{diag} \left(\frac{1}{M_2}\right), 
\end{align}
\begin{align}
[d_3]=d \left[
\begin{array}{cc}
1 & 0 \\
0&  \gamma  
\end{array}
\right], \,\,\,\,\text{where}\,\,\,\,d=\frac{1}{M_1} \text{diag} \left(\frac{1}{R_1}\right) +  \frac{1}{M_2}\text{diag}\left(\frac{1}{S_1}\right).
\end{align}
%
%FALSE It follows that $D^2H_{\text{equil}}(\z_e)$ is positive definite if the block $A$ is positive definite

\iffalse
\begin{align}
\mathbb{J} D^2H_{\text{equil}}(\z_e) = 
\left[
\begin{array}{cccc}
B^t &D \\
-A& -B
\end{array}
\right].
\end{align}
\fi

\smallskip
We now consider the equilateral triangular RE  given by formulae \eqref{equi_1}-\eqref{equi_2}. In this case the Hessian matrix $D^2V\big|_{\z=\z_e}$ takes the form
\begin{align}\label{E:DV2}
D^2V\big|_{\z=\z_e}
\left[
\begin{array}{cccccc}
*&*&*&*&0&0\\
*&*&*&*&0&0\\
*&*&*&*&0&0\\
*&*&*&*&0&0\\
0&0&0&0&a&\gamma^2 a \\
0&0&0&0&\gamma^2 a&\gamma^4a\\
\end{array}
\right]
\end{align}
where $*$ are the appropriate partial derivatives evaluated at the RE  $\partial V_{ij}\big|_{\z=\z_e}$, $1\leq i\leq j\leq 4$ entries, and
\begin{equation}
a:=\frac{3\sqrt{3}}{8(1+\gamma^2)\sqrt{1+\gamma^2}}\left( l V''(l)- V'(l) \right).
\end{equation}
Let us denote
\[
[a]:=
\left[
\begin{array}{cc}
a&\gamma^2 a \\
\gamma^2 a&\gamma^4a\\
\end{array}
\right]
=a \left[
\begin{array}{cc}
1&\gamma^2 \\
\gamma^2 &\gamma^4\\
\end{array}
\right].
\]

%
%Assuming that we are in the generic case of $l V''(l)- V'(l)\neq 0$, we notice that 
The eigenvalues $\lambda_1, \lambda_2$  of $[a]$, 
and corresponding eigenvectors $\u_1$ and $\u_2$, are
\begin{align}
&\lambda_1= 0, &&\u_1=(-\gamma^2, 1),
\label{zero_eigenval}
\\
&\lambda_2 = a\left(\gamma^4+1\right), 
%&\lambda_2=\frac{ 
%3l(1+\gamma^4)
%\left(lV''(l)- 
%V'(l)
%\right)
%}{8(1+\gamma^2)^2},
&&\u_2=\left( 1,\gamma^2 \right)\,. \notag
\end{align}
Since $[a]$ has a zero eigenvalue, it follows from the structure given above 
(including Equation \ref{E:DV2}) 
that $D^2H_{\text{equil}}(\z_e)$ also has a zero eigenvalue.
The corresponding eigenspace is
\begin{align*}
\ker D^2H_{\text{equil}}(\z_e)
= \left\{ (\v_1,\dots,\v_6) \in \R^{12} \,| \, \v_3 \in \ker [a], \v_4=\v_5=0 \textrm{ and }
(\v_1, \v_2, \v_6) \in \ker [F]
\right\},
\end{align*}
where
\begin{align*}
[F] = \left[
\begin{array}{ccc}
* & * & [b_1] \\
* & * & [b_2]\\
{[ }b_1] & [b_2] & [d_3]
\end{array}
\right],
\end{align*}
in which the upper left-hand $2\times 2$ block equals that of $A$.
Note that for generic potentials $V$, and generic values of $\mu$ and $\gamma$, 
the matrix $[F]$ is nonsingular, and
hence $\ker D^2H_{\text{equil}}(\z_e)$ has dimension $1$.
%In particular, $\det [a] = 0$.
%From the block structure of $D^2H_{\text{equil}}(\z_e)$ given above,
%$\det D^2H_{\text{equil}}(\z_e)$ is a multiple of $\det [a]$, and so it also equals $0$.
%Thus $D^2H_{\text{equil}}(\z_e)$ has a zero eigen
%The geometric multipli

Since $\mathbb{J}$ is invertible, it follows that the linearization 
$\mathbb{J}D^2H_{\text{equil}}(\z_e)$ also has a kernel of  dimension at least $1$ 
%(or, equivalently, it has a zero eigenvalue with geometric multiplicity $1$). 
Under the condition det$[F] \neq 0$, the kernel  of $\mathbb{J}D^2H_{\text{equil}}(\z_e)$  has dimension exactly 1. 
The infinitesimally symplectic structure of $\mathbb{J}D^2H_{\text{equil}}(\z_e)$
implies that the \textit{algebraic} multiplicity of the zero eigenvalue is at least two. Thus, if det$[F] \neq 0$, the Jordan normal 
form of $\mathbb{J} D^2H_{\text{equil}}(\z_e)$ 
contains a block of the form%
\[
\left[
\begin{array}{cc}
0& 1 \\
0&0\\
\end{array}
\right]
\]
or, equivalently, the semi-simple-nilpotent decomposition has a nontrivial nilpotent term.

Finally, in a spectrally stable system, the presence of a nontrivial nilpotent term
induces a ``drift'' in the dynamics, implying (nonlinear) instability (see, for instance, \cite{Me07}).
Thus we have proven the following,

\begin{theorem}\label{unstable}
For generic potentials, the equilateral configuration RE is unstable.
\end{theorem}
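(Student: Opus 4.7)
The plan is to show that at the equilateral RE $\z_e$ given by \eqref{equi_1}--\eqref{equi_2}, the linearization $\mathbb{J} D^2 H_{\text{equil}}(\z_e)$ has a zero eigenvalue whose algebraic multiplicity strictly exceeds its geometric multiplicity, so that its Jordan normal form contains a nontrivial $2 \times 2$ nilpotent block. By a standard result (see \cite{Me07}), such a block produces a polynomial-in-time drift in the linearized dynamics that rules out Lyapunov stability of the underlying equilibrium, and hence instability of the RE in the unreduced system.

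First I would compute $D^2 H_{\text{equil}}(\z_e)$ in the block form laid out in \eqref{block_A} and the expressions that follow. The computation splits naturally into a kinetic-metric contribution, which is block-diagonal in the $(R, S, \Phi)$ coordinates, and the potential Hessian $D^2 V\big|_{\z=\z_e}$. The key observation is that at the equilateral configuration, the $(\Phi_1, \Phi_2)$ sub-block of $D^2 V\big|_{\z=\z_e}$ is the rank-one matrix $[a]$, whose eigenvalues are $0$ (with eigenvector $(-\gamma^2, 1)$) and $a(\gamma^4+1)$. Because the $\Phi$-coordinates couple to no other configuration variable in $D^2 V\big|_{\z=\z_e}$ and to no momentum in the kinetic term, this zero eigenvalue lifts to a zero eigenvalue of the full Hessian.

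Next I would characterize $\ker D^2 H_{\text{equil}}(\z_e)$ explicitly. A vector in the kernel must have its $\Phi$-component in $\ker [a]$, its $(P_{R_j}, P_{S_j})$-components equal to zero, and its remaining $(R_j, S_j, P_{\Phi_j})$-components in the null space of a $6 \times 6$ matrix $[F]$ that assembles the blocks $[a_1], [a_2], [b_1], [b_2], [d_3]$ together with the $(R,S)$ part of $D^2 V\big|_{\z=\z_e}$. Under the genericity condition $\det [F] \neq 0$, the kernel is exactly one-dimensional, and so, since $\mathbb{J}$ is invertible, is the geometric eigenspace of the zero eigenvalue of $\mathbb{J} D^2 H_{\text{equil}}(\z_e)$.

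To finish I would invoke the fact that eigenvalues of an infinitesimally symplectic matrix appear in quadruples $\{\pm \lambda, \pm \bar \lambda\}$ or in real/imaginary pairs $\{\pm \lambda\}$, so the algebraic multiplicity of the zero eigenvalue is necessarily even. With geometric multiplicity $1$ and algebraic multiplicity at least $2$, the Jordan form of $\mathbb{J} D^2 H_{\text{equil}}(\z_e)$ must contain a $2 \times 2$ nilpotent block, and instability follows. The main obstacle is showing that $\det[F] \neq 0$ really is generic; this reduces to checking that an explicit expression built from $V'(l), V''(l)$ and the partials $\partial^2_{ij} V$ at the equilateral configuration is not identically zero across the family of attractive potentials and the parameters $(\mu, \gamma)$, which is a concrete but somewhat laborious computation.
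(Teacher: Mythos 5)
Your proposal follows essentially the same route as the paper's proof: the same block decomposition of $D^2H_{\text{equil}}(\z_e)$, the same identification of the rank-one $(\Phi_1,\Phi_2)$ sub-block $[a]$ producing a one-dimensional kernel under the genericity condition $\det[F]\neq 0$, the same parity argument forcing the algebraic multiplicity of the zero eigenvalue of the infinitesimally symplectic linearization to be at least two, and the same conclusion via a nontrivial nilpotent Jordan block and the resulting drift. The proposal is correct, and like the paper it leaves the genericity of $\det[F]\neq 0$ as an assertion rather than an explicit verification.
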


\bigskip
For example, for the classical Newtonian potential $V(r)=-1/r$, numerical evidence shows that the det$[F]$ is strictly negative for all permitted values of $\mu$ and $\gamma$ and so the ``Newtonian" equilateral configuration RE is unstable.
%As a final observation, we note that on the case of Jacobi potential $V(r)=-1/r^2,$ the equation about has a continuum of solutions and thus it will be studied separately elsewhere.)
The same is valid for the Jacobi potential $V(r)=-1/r^2$,  with the observation that in this case any $r>0$ is  a solution of \eqref{r_0} (i.e. there is a continuum of solutions) as long as $\mu$ and $\gamma$ lie along the curves given by $\displaystyle{\mu^2= 3/\left(2(1+\gamma^2)^2\right).}$

\section{Final remarks}

The present study  is the tip of the iceberg of interesting research. One may ask a multitude of  questions: is there an equivalent of Bertrand's theorem in $\mathbb{R}^4$? for what potentials can we regularize  collisions? what are the equivalent of the figure 8 solutions (see \cite{CM00})? what is the equivalent of Saari's conjecture (see \cite{Sa05}) and it is true in $\mathbb{R}^4$? how many REs can be found for a  fixed $n\geq 3$ and what is their stability? etc. Also, an interesting study may be made of the $\mathbb{R}^4$ gravitational potential, that is the Jacobi potential $V(r)=-k/r^2$, $k>0$. It is also worth mentioning that that the latter marks the threshold between weak and strong forces as defined in physics. 
Last but not least, one may also study the generalization of  the $n$-body problem in  $\mathbb{R}^{n}$, $n\geq 4$, continuing the research started here and elsewhere.  We leave all these for future endeavours.

\section*{Acknowledgement}
CS thanks University of Ottawa for hosting her during her Sabbatical. The authors thank the anonymous reviewers for useful comments and remarks.

\enlargethispage{20pt}

%\ethics{Insert ethics statement here if applicable.}

%\dataccess{Insert details of how to access any supporting data here.}

%\aucontribute{CS is the lead author of this work. TS contributed to the geometric
%setting and study of relative equilibria.}

%\disclaimer{Insert disclaimer text here if applicable.}

%%%%%%%%%% Insert bibliography here %%%%%%%%%%%%%%

\end{document}